\documentclass[%
 reprint,
 amsmath,amssymb,
pra,
]{revtex4-2}

\usepackage[dvipsnames]{xcolor}

\newtheorem{theorem}{Theorem}
\newtheorem{lemma}{Lemma}
\newtheorem{proof}{Proof}
\newtheorem{remark}{Remark}
\usepackage{braket}
\usepackage{graphicx} 
\usepackage{subcaption}
\usepackage{booktabs,array}
\usepackage{bbm}
\usepackage{calc} 
\usepackage{etoolbox}

\usepackage{tocloft}

\newcommand{\ii}{\mathrm{i}}
\newcommand{\e}{\mathrm{e}}

\usepackage{listings}
\usepackage[ruled]{algorithm2e}
\usepackage[colorlinks,linkcolor=blue,anchorcolor=blue,citecolor=blue]{hyperref}

\usepackage{graphicx}
\usepackage{dcolumn}
\usepackage{bm}

\begin{document}

\title{Quantum Walk Search on Complete Multipartite Graph with Multiple Marked Vertices}

\author{Ningxiang Chen}
\affiliation{State Key Lab of Processors, Institute of Computing Technology, Chinese Academy of Sciences, Beijing 100190, China}
\affiliation{School of Computer Science and Technology, University of Chinese Academy of Sciences, Beijing 100049, China}

\author{Meng Li}
\email[Corresponding author: ]{limeng2021@ict.ac.cn}
\affiliation{State Key Lab of Processors, Institute of Computing Technology, Chinese Academy of Sciences, Beijing 100190, China}
\affiliation{School of Computer Science and Technology, University of Chinese Academy of Sciences, Beijing 100049, China}

\author{Xiaoming Sun}
\affiliation{State Key Lab of Processors, Institute of Computing Technology, Chinese Academy of Sciences, Beijing 100190, China}
\affiliation{School of Computer Science and Technology, University of Chinese Academy of Sciences, Beijing 100049, China}

\date{\today}

\begin{abstract}
Quantum walk is a potent technique for building quantum algorithms. This paper examines the quantum walk search algorithm on complete multipartite graphs with multiple marked vertices, which has not been explored before. Two specific cases of complete multipartite graphs are probed in this paper, and in both cases, each set consists of an equal number of vertices. 
We employ the coined quantum walk model and achieve quadratic speedup with a constant probability of finding a marked vertex. Furthermore, we investigate the robust quantum walk of two cases and demonstrate that even with an unknown number of marked vertices, it is still possible to achieve a quadratic speedup compared to classical algorithms and the success probability oscillates within a small range close to 1.
This work addresses the overcooking problem in quantum walk search algorithms on some complete multipartite graphs.
We also provide the numerical simulation and circuit implementation of our quantum algorithm.
\end{abstract}

\maketitle

\section{Introduction}

Quantum computing is an approach that utilizes quantum mechanics to process information and perform calculations. In recent years, it has been extensively researched for its potential in solving various problems \cite{li2022quantum, zheng2022quantum, he2023exact, gao2023quantum}.
Quantum walk is a quantum version of the classical random walk, which is a fundamental concept in quantum computation and quantum information and was first introduced by Aharonov, Davidovich, and Zagury in 1993 \cite{aharonov_quantum_1993}. It has been widely studied in the past 30 years. There are two types of quantum walk, continuous-time quantum walk \cite{childs_spatial_2004} and discrete-time quantum walk \cite{shenvi_quantum_2003} distinguished by the operator of the walk. Quantum walk has been applied to various fields and has become a powerful tool in quantum algorithm design. Quantum walk can be used to solve various problems, such as element distinctness problem \cite{ambainis2007quantum}, triangle finding problem \cite{magniez2007quantum}, and so on. Quantum walk algorithms have been implemented in a variety of physical systems including trapped ions \cite{zahringer2010realization}, trapped atoms \cite{karski2009quantum}, photonic chip \cite{tang2018experimental}, superconducting processor \cite{gong2021quantum}.

Quantum walk search algorithms are quantum algorithms that can search marked vertices in a graph, first introduced by Shenvi Neil, Kempe Julia, and Whaley K. Birgitta in 2003 which studied discrete quantum walk search on hypercube \cite{shenvi_quantum_2003}. Then the quantum walk search algorithm has been studied on various types of graphs, such as complete bipartite graph \cite{rhodes_quantum_2019, peng2024deterministic}, Johnson graph \cite{wong_quantum_2016, peng2024lackadaisical}, tree graph \cite{dimcovic2011framework}, and so on. On some graphs, the quantum walk search algorithm can achieve a quadratic speedup compared to the classical search algorithm. Grover's algorithm can be viewed as a quantum search algorithm on a complete graph with self-loops \cite{ambainis2004coins}, which can achieve a quadratic speedup. For the continuous model of quantum walk, it has been proven that it can search any number of marked vertices on all types of graphs \cite{apers_quadratic_2022}. The Markov model can also provide a universal solution to any search problem \cite{ambainis2020quadratic}.  However, there are still numerous defects in the discrete model of quantum walk, especially coined quantum walk. For instance, the applicability of the model depends on the type of graph and the number of marked vertices. Therefore, we focus on the coined quantum walk in this paper.

Quantum walks on complete multipartite graphs have been studied as follows. In 2009, Reitzner et al. delved into the quantum walk search on a complete multipartite graph with a single marked vertex, achieving quadratic speedup \cite{reitzner_quantum_2009}. In 2018, the investigation expanded to the continuous-time quantum walk search on a complete multipartite graph with one marked vertex, demonstrating the preservation of its quadratic speedup \cite{chiang_optimizing_2018}. Furthermore, in 2021, research on quantum walk search with one marked vertex and state transfer on a multipartite graph with self-loops was studied by Skoupy \cite{skoupy_quantum_2021}. However, there has been no prior research on the search for a complete multipartite graph with multiple marked vertices, which is exactly what this paper will focus on.

As mentioned by Brassard \cite{brassard_searching_1997}, the Grover search algorithm is suffering from a problem that the success probability of finding the marked vertices will oscillate with the increase of steps and the oscillation frequency is dependent on the number of marked vertices. This problem is called the overcooking problem. This results in the inability to search without knowing the number of marked vertices. The problem also exists in quantum walk search. 

Two methods are often applied to overcome the overcooking problem, one of which is using quantum counting to estimate the number of marked vertices before the search process \cite{brassard2002quantum}, and the other is improving the search operators to gain robustness. Grover proposed a fixed-point search algorithm \cite{grover_fixed-point_2005} in 2005 which can search marked vertices in a complete graph without knowing the number of marked vertices, but this method doesn't realize the quadratic speedup. Yoder et al. proposed a fixed-point version of the Grover search \cite{yoder_fixed-point_2014} in 2014 by adding parameters to the Grover search algorithm. The parameterized Grover search algorithm can achieve a quadratic speedup compared to the classical search algorithm, and the success probability of finding the marked vertices will oscillate within a small range close to 1. Xu et al. proposed a robust version of the quantum walk search algorithm on complete bipartite graph in 2022 \cite{xu_robust_2022}, which can also achieve a quadratic speedup. However, robust quantum walk algorithms on other graphs haven't been studied. Thus, we consider the robustness of the quantum walk search algorithm on complete multipartite graphs.

In this paper, we study the quantum walk search algorithm on the complete multipartite graph with multiple marked vertices and consider the robustness of this algorithm. We first introduce the definition of the complete multipartite graph and discrete-time quantum walk search algorithm in section \ref{sec:Preliminaries}, then we apply the quantum walk model to complete multipartite graph with multiple marked vertices in section \ref{sec:nonrobust}. Furthermore, we research the robust version of quantum walk search of the two cases in section \ref{sec:robust}. In section \ref{sec:simulation}, we conduct numerical simulations for further verification of our conclusion. In section \ref{sec: circuit}, we provide the circuit implementation of our algorithm. Finally, we end with a summary in section \ref{sec:conclusion}.

\section{Preliminaries}\label{sec:Preliminaries}
\subsection{Complete M-partite Graph}
A graph G = (V, E) is called M-partite if its vertex set can be partitioned into M disjoint sets such that any two vertices within the same set are not connected. A complete M-partite graph is an M-partite graph in which every pair of vertices from different sets are adjacent. 

In this paper, we will consider two special cases of complete M-partite graph. In both cases, each set contains an equal number of vertices, denoted as $N$. The first case is that the number of the marked vertices in each set is the same, an example is shown in Figure~\ref{fig:fig1}. Each set comprises $n$ marked vertices and ${w}$ unmarked vertices, with a total of $n + {w} = N$ vertices in each set.
The second case is that the marked vertices only exist in one set as shown in Figure~\ref{fig:fig2}. Here, $n$ represents the number of marked vertices in the first set. Without loss of generality, we assume that the marked vertices of the second case are confined to the first set.

In the subsequent sections of the paper, we denote by the symbol $v_{i,j}$ the $j$-th vertex in the $i$-th subset, and without loss of generality, we assume that if there are marked vertices in a subset, the first $n$ vertices are the marked ones.

\begin{figure}[htbp]
    \centering
    \begin{subfigure}{0.4\textwidth}
      \includegraphics[width=\linewidth]{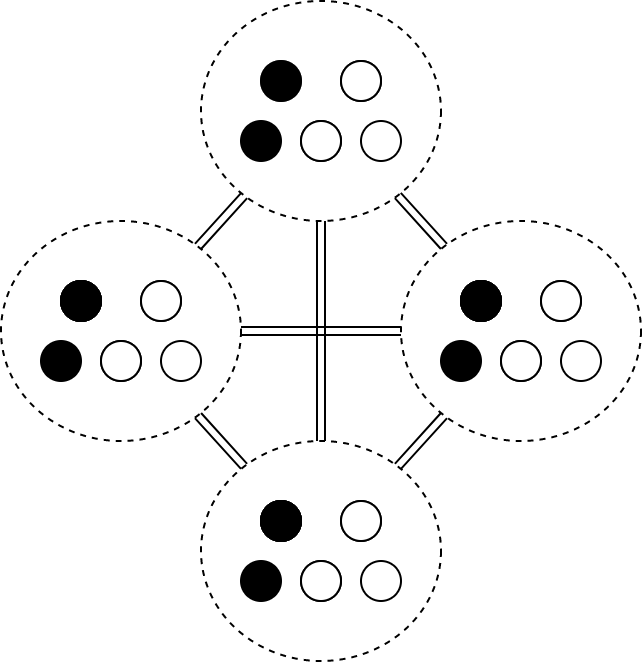}
      \caption{case 1: every set contains marked
vertices}
      \label{fig:fig1}
    \end{subfigure}
    \hfill
    \begin{subfigure}{0.4\textwidth}
      \includegraphics[width=\linewidth]{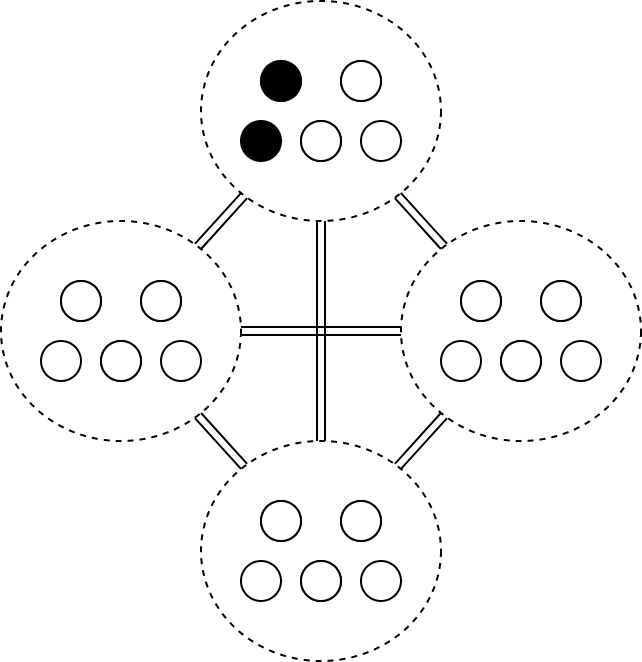}
      \caption{case 2: one set contains marked
vertices}
      \label{fig:fig2}
    \end{subfigure}
    \caption{An example of a complete 4-partite graph of case 1 and case 2 with $N = 5$ and $n = 2$, the black points represent the marked vertices and the white ones represent the unmarked vertices. There are no edges between pairs of vertices within the same set, but there are edges between pairs of vertices from different sets. The black double line means that any vertex in the set at one end of the black double line has an edge to any vertex in the set at the other end.}
    \label{fig:twosubfigures}
\end{figure}

\subsection{Coined Quantum Walk}

The Hilbert space associated with the coined quantum walk operator can be described as $\mathcal{H} = \mathcal H_P \otimes \mathcal H_C  = span\{ \ket{uv}: u,v\in V, (u,v)\in E\}$, where $\ket{uv}$ is the state that the walker is at vertex $u$ and the coin is in state $v$, $\mathcal H_P$ is the position space, and the coin space $\mathcal{H}_C$ represents the direction indicating the next walk step to take. The walk operator is given by $W = SC$, where $S$ is the shift operator defined as
\begin{align*}
    S\ket{uv} = \ket{vu},
\end{align*}
and $C$ is the coin operator defined as 
\begin{align*}
    C = \sum_u\ket{u}\bra{u}\otimes C_u,
\end{align*}
$C_u$ is the Grover diffusion operator often defined as $C_u = 2\ket{s_u}\bra{s_u} - I$, $\ket{s_u} = \frac{1}{\sqrt{d_u}} \sum_{v\in N(u)}\ket{v}$, in which $d_u$ is the degree of $u$ in graph and $N(u)$ is the set of neighbours of $u$.
The coin operator and shift operator are both unitary operators acting on the Hilbert space $\mathcal{H}$.

To find marked vertices in a graph, we use a query oracle $Q$ defined by 
\begin{align*}
    Q\ket{uv} = \begin{cases}
        -\ket{uv} &  u~is~marked,\\
        \ket{uv} &  u~is~not~marked.
    \end{cases}
\end{align*}
The search operator is given by $U = WQ = SCQ$ which is also a unitary operator acting on the Hilbert space $\mathcal{H}$.
The initial state is often chosen as the uniform superposition state
\begin{align}
    \ket{\psi_0 } = \frac 1 {\sqrt{2 d}} \sum_{u \in V} \sum_{v \in N(u)} \ket{uv},
\end{align}
where $d$ is the total number of edges in the graph.
The quantum state after $t$ steps is 
\begin{align}
    \ket{\psi_t} = U^{t}\ket{\psi_0},
\end{align}
and then we measure the first and second registers until finding a marked vertex, the success probability is 
\begin{align}
    p_{succ}(t) = \sum_{u\ or\ v\ is\ marked}|\langle{\psi_t}|{uv}\rangle|^2.
\end{align}

In this paper, the ``$\sim$" signifies proportionality, meaning the expressions on both sides of it differ by a constant factor, and the ``$\simeq$" signifies asymptotically equal, meaning the expressions on both sides of it are approximately equal in the limit sense.

\section{Quantum walk search on Complete Multipartite Graph}\label{sec:nonrobust}
\subsection{Case 1: Every Set Contains Marked Vertices}\label{sec:nonrobust_case1}
 
We first consider the case that the number of vertices and marked vertices in each set is the same. Vertices can be classified into two types, marked vertices denoted by $a$ and unmarked vertices denoted by $b$.
Therefore the search operator $U$ in the Hilbert space $\mathcal{H}$ can be described as an operator on a four dimension invariant subspace $span\{ \ket{aa}, \ket{ab}, \ket{ba}, \ket{bb}\}$, which can be shown in Figure~\ref{fig:fig3}. These four basis states $\ket{aa}, \ket{ab}, \ket{ba}, \ket{bb}$ are given below:
\begin{figure}[htbp]
    \centering
    \includegraphics[width=0.45\textwidth]{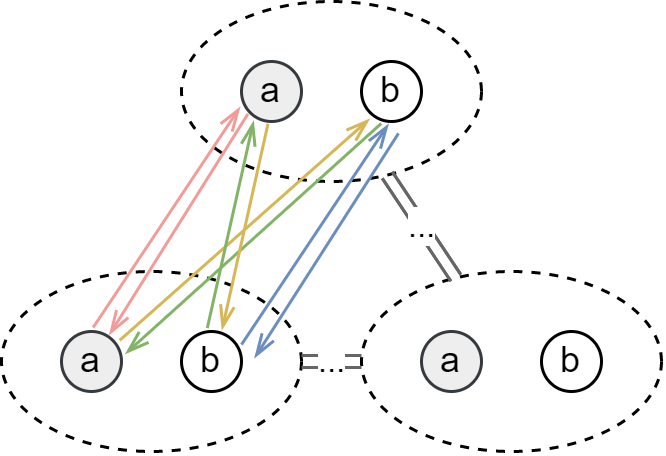}
    \caption{Complete M-partite graph of case 1, grey vertices are marked vertices. Red arrows denote states in $\ket{aa}$, yellow in $\ket{ab}$, green in $\ket{ba}$, and blue in $\ket{bb}$.
    The relationships between any other two pairs of sets are also as described earlier.
    }
    \label{fig:fig3}
\end{figure}

\begin{align*}
|aa\rangle  &= \frac 1 {\sqrt{Mn^2(M-1)}} \sum_{i_1 = 1}^M\sum_{j_1 = 1}^n \sum_{\begin{subarray}
    \\ i_2 = 1 \\ i_2\ne i_1
\end{subarray}}^M \sum_{j_2 = 1}^n |v_{i_1,j_1}v_{i_2,j_2}\rangle,\\
|ab\rangle &= \frac 1 {\sqrt{Mn(M-1){w}}} \sum_{i_1 = 1}^M\sum_{j_1 = 1}^n \sum_{\begin{subarray}
    \\ i_2 = 1 \\ i_2\ne i_1
\end{subarray}}^M \sum_{j_2 = n+1}^{n+{w}} |v_{i_1,j_1}v_{i_2,j_2}\rangle,\\
|ba\rangle &= \frac 1 {\sqrt{Mn(M-1){w}}} \sum_{i_1 = 1}^M\sum_{j_1 = n+1}^{n+{w}} \sum_{\begin{subarray}
    \\ i_2 = 1 \\ i_2\ne i_1
\end{subarray}}^M \sum_{j_2 = 1}^n |v_{i_1,j_1}v_{i_2,j_2}\rangle,\\
|bb\rangle &= \frac 1 {\sqrt{M{{w}}^2(M-1)}} \sum_{i_1 = 1}^M\sum_{j_1 = n+1}^{n+{w}} \sum_{\begin{subarray}
    \\ i_2 = 1 \\ i_2\ne i_1
\end{subarray}}^M \sum_{j_2 = n+1}^{n+{w}} |v_{i_1,j_1}v_{i_2,j_2}\rangle.
\end{align*}
The initial state can be represented as 
\begin{align*}
\ket {\psi_0} &= \frac 1 {\sqrt{M(M-1)N^2}} \sum_{i_1 = 1}^M\sum_{j_1 = 1}^{N} \sum_{\begin{subarray}\\ i_2 = 1 \\ i_2\ne i_1\end{subarray}}^M \sum_{j_2 = 1}^{N} |v_{i_1,j_1}v_{i_2,j_2}\rangle\\
    &= \frac{n}{N} \ket{aa} + \frac{{w}}{N} \ket{bb} + \frac{\sqrt{n {w}}}{N}(\ket{ab} + \ket{ba})\\ 
    &= \frac 1 N \left( \begin{matrix}
        n\\ \sqrt{n {w}}\\ \sqrt{n {w}}\\ {w}
    \end{matrix} \right),
\end{align*}
and the search operators under this basis are given by 
\begin{align*}
S &= \left(
\begin{array}{cccc}
 1 & 0 & 0 & 0 \\
 0 & 0 & 1 & 0 \\
 0 & 1 & 0 & 0 \\
 0 & 0 & 0 & 1 \\
\end{array}
\right), \\
C &= I \otimes \left(\begin{matrix}
    2 \frac n N - 1 & 2 \frac {\sqrt{n {w}}} N\\
    2 \frac {\sqrt{n {w}}} N & 2 \frac {{w}} N - 1
\end{matrix}\right),
\\
Q &= \left(
\begin{array}{cccc}
 -1 & 0 & 0 & 0 \\
 0 & -1 & 0 & 0 \\
 0 & 0 & 1 & 0 \\
 0 & 0 & 0 & 1 \\
\end{array}
\right),\\
U &= \left(
\begin{array}{cccc}
 1-\frac{2 n}{n+{w}} & -\frac{2 \sqrt{n {w}}}{n+{w}} & 0 & 0 \\
 0 & 0 & \frac{2 n}{n+{w}}-1 & \frac{2 \sqrt{n {w}}}{n+{w}} \\
 -\frac{2 \sqrt{n {w}}}{n+{w}} & 1-\frac{2 {w}}{n+{w}} & 0 & 0 \\
 0 & 0 & \frac{2 \sqrt{n {w}}}{n+{w}} & \frac{2 {w}}{n+{w}}-1\\
\end{array}
\right).
\end{align*}


We can calculate the probability of finding the walker at marked vertices after $t$ steps by computing the eigenvalues and eigenstates of $U$. The eigenvalues in the limit of large-M and large-N are $-1,1,\e^{-\ii\omega}, \e^{\ii\omega}$, in which $\omega$ is given by $\omega = \arccos(\frac {{w} -n}{n +{w}})$. 
The corresponding eigenstates are
\begin{equation}\begin{aligned}
-1:& \ket{v_1} = \sqrt{\frac{n}{2(n+{w})}}\left(\begin{matrix}-1 & -\sqrt{\frac {{w}} n} &  -\sqrt{\frac {{w}} n} & 1\end{matrix}\right)^\top ,\\
1:& \ket{v_2} = \sqrt{\frac{{w}}{2(n+{w})}}\left(\begin{matrix}-1 & \sqrt{\frac n {{w}} }&  \sqrt{\frac n {{w}} } &1 \end{matrix}\right)^\top ,\\
\e^{-\ii\omega}:& \ket{v_3} = \left(\begin{matrix}\frac{1}{2} & \frac{\ii}{2} & -\frac{\ii}{2} & \frac{1}{2}\end{matrix}\right)^\top ,\\
\e^{\ii\omega}:& \ket{v_4} = \left(\begin{matrix}\frac{1}{2} & -\frac{\ii}{2} & \frac{\ii}{2} & \frac{1}{2}\end{matrix}\right)^\top ,
\end{aligned}\label{eq:3_1_eigen}\end{equation}
where the superscript $\top$ indicates transpose.

Then we can decompose the initial state $\ket{\psi_0}$ into the eigenstates of $U$ when $N\gg n$:
\begin{align*}
    \ket{\psi_0} \simeq& (0,0,0,1)^\top  \\
    =& \sqrt{\frac{n}{2(n +{w})}} \ket{v_1} + \sqrt{\frac{{w}}{2(n +{w})}} \ket{v_2} \\&+ \frac 1 2 \ket{v_3} + \frac 1 2\ket{v_4}.
\end{align*}
Thus the state after $t$ steps is 
\begin{equation}\begin{aligned}
    |\psi_t\rangle 
    =& U^t |\psi_0\rangle \\
    \simeq& \sqrt\frac{n}{2(n +{w})} (-1) ^t |v_1\rangle + \sqrt\frac{{w}}{2(n +{w})} 1 ^t |v_2\rangle \\
    &+ \frac 1 2 \e^{-\ii\omega t} |v_3\rangle + \frac 1 2 \e^{\ii\omega t} |v_4\rangle.
    \label{eq:3_1_psi_t}
\end{aligned}\end{equation}

The probability of finding the walker at marked vertices after $t$ steps can be formalized as
\begin{align*}
    p_{succ}(t) = 1 - \langle \psi_t |bb\rangle \langle bb| \psi_t \rangle = 1 - |\langle \psi_t | bb\rangle| ^2.
\end{align*}
Combining Eq.\eqref{eq:3_1_eigen} and Eq.\eqref{eq:3_1_psi_t}, we can obtain  
\begin{align*}
    \langle \psi_t | bb\rangle =
    \begin{cases}
        \frac 12 \cos(\omega t) + \frac 12 & t \text{ is even},\\
        \frac 12 \cos(\omega t) + \frac {{w} - n}{2 (n +{w})}  & t \text{ is odd}.
    \end{cases}
\end{align*}

When t is an even number, the maximum success probability is achieved at $\omega t = \pi$, and when t is an odd number, the maximum success probability is attained at $\omega t = \arccos( \frac {n - {w}}{n +{w}} )$. The maximum probability of finding the walker at marked vertices is $p_{succ}(t) = 1$. So we can choose the number of steps $t$ as the nearest even number to $\frac{\pi}{\omega}$ or the nearest odd number to $\frac{\arccos( \frac {n - {w}}{n +{w}} )}{\omega}$ to achieve the maximum probability of finding the walker at marked vertices.

Assuming that $N \gg n$, the step to achieve the maximum probability is

\begin{align}
     t= \begin{cases}
          \frac{\pi}{\arccos(\frac{{w} -n}{n+{w}})} \simeq \frac {\pi}{2} \sqrt{\frac{N}{n}} = O(\sqrt{\frac{N}{n}}) & t \text{ is even}, \\
          \frac{\arccos( \frac {n - {w}}{n +{w}} )}{\arccos(\frac{{w} -n}{n+{w}})}
          \simeq \frac {\pi}{2} \sqrt{\frac{N}{n}} = O(\sqrt{\frac{N}{n}}) & t \text{ is odd}.
     \end{cases}
\end{align}
Because there are $MN$ vertices and $Mn$ marked vertices in total, the complexity of the classical search algorithm is $O(\frac{MN}{Mn}) = O(\frac Nn)$, thus the quantum algorithm can achieve a quadratic speedup compared to the classical search algorithm.

\subsection{Case 2: One Set Contains Marked Vertices}\label{sec:nonrobust_case2}

In this subsection, we consider the case that the marked vertices only exist in the first set noted as set 1. 
Vertices can be classified into 3 types, marked vertices in set 1 denoted by $a$, unmarked vertices in set 1 denoted by $b$, and unmarked vertices in other sets denoted by c.

Therefore the search operator on the Hilbert space can be described as an operator on a five dimension subspace $span\{\ket{ca}, \ket{ac}, \ket{bc}, \ket{cb}, \ket{cc}\}$, where $\{ \ket{ca}, \ket{ac}, \ket{bc}, \ket{cb}, \ket{cc}\}$ are shown in Figure~\ref{fig:fig4} and can be given by the following formulas:  

\begin{figure}[htbp]
    \centering
    \includegraphics[width=0.45\textwidth]{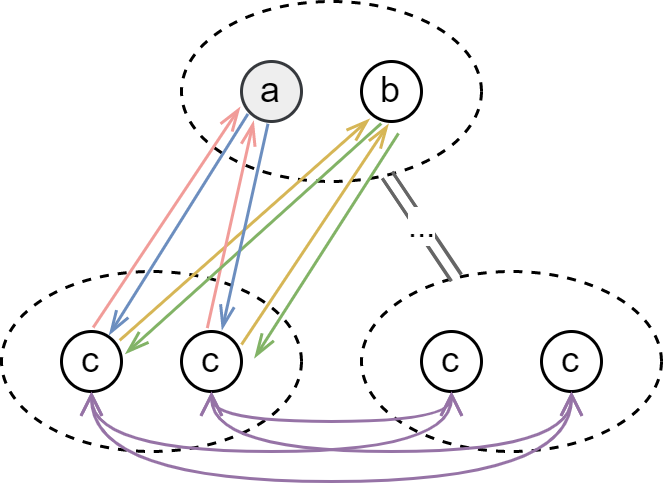}
    \caption{Complete 3-partite graph of case 2, grey vertices are marked vertices.
    Red arrows represent states in $\ket{ca}$, blue in $\ket{ac}$, green in $\ket{bc}$, yellow in $\ket{cb}$, and purple in $\ket{cc}$.}
    \label{fig:fig4}
\end{figure}

\begin{align*}
    \ket{ca} &= \frac{\sum _{m=2}^M \sum _{n_1=1}^N \sum _{n_2=1}^n |v_{m,n_1} v_{1,n_2}\rangle }{\sqrt{(M-1) n N}},\\ 
    \ket{ac} &= \frac{\sum _{m=2}^M \sum _{n_1=1}^N \sum _{n_2=1}^n |v_{1,n_2} v_{m,n_1}\rangle }{\sqrt{(M-1) n N}},\\ 
    \ket{bc} &= \frac{\sum _{m=2}^M \sum _{n_1=n+1}^N \sum _{n_2=1}^N |v_{1,n_1} v_{m,n_2}\rangle }{\sqrt{(M-1) N (N-n)}},\\
    \ket{cb} &= \frac{\sum _{m=2}^M \sum _{n_1=n+1}^N \sum _{n_2=1}^N |v_{m,n_2} v_{1,n_1}\rangle }{\sqrt{(M-1) N (N-n)}},\\
    \ket{cc} &= \frac{\sum _{m_1=2}^M \sum _{
       \begin{subarray} 
       \\m_2 = 2 \\ m_2 \ne m_1 
       \end{subarray}}
    ^M \sum _{n_1=1}^N \sum _{n_2=1}^N |v_{m_1,n_1} v_{m_2,n_2}\rangle }{\sqrt{(M-2) (M-1) N^2}}.
\end{align*}
When $M$ is large enough, the initial state is approximate to
\begin{align*}
    \ket{\psi_0} \simeq \ket{cc} = \left( \begin{matrix}
        0&0&0&0&1
    \end{matrix} \right)^\top ,
\end{align*}
and the search operators under this basis are given by
\begin{align*}
    S = \left(
\begin{array}{ccccc}
 0 & 1 & 0 & 0 & 0 \\
 1 & 0 & 0 & 0 & 0 \\
 0 & 0 & 0 & 1 & 0 \\
 0 & 0 & 1 & 0 & 0 \\
 0 & 0 & 0 & 0 & 1 \\
\end{array}
\right), \ Q= \left(
\begin{array}{ccccc}
 1 & 0 & 0 & 0 & 0 \\
 0 & -1 & 0 & 0 & 0 \\
 0 & 0 & 1 & 0 & 0 \\
 0 & 0 & 0 & 1 & 0 \\
 0 & 0 & 0 & 0 & 1 \\
\end{array}
\right),
\end{align*}

\begin{widetext}
\begin{align*}
C = \left(
\begin{array}{ccccc}
 \frac{2 n}{(M-1) N}-1 & 0 & 0 & \frac{2 \sqrt{n (N-n)}}{(M-1) N} & \frac{2 \sqrt{\frac{(M-2) n}{N}}}{M-1} \\
 0 & 1 & 0 & 0 & 0 \\
 0 & 0 & 1 & 0 & 0 \\
 \frac{2 \sqrt{n (N-n)}}{(M-1) N} & 0 & 0 & \frac{2 (n-N)}{N-M N}-1 & \frac{2 \sqrt{\frac{(M-2) (N-n)}{N}}}{M-1} \\
 \frac{2 \sqrt{\frac{(M-2) n}{N}}}{M-1} & 0 & 0 & \frac{2 \sqrt{\frac{(M-2) (N-n)}{N}}}{M-1} & \frac{2 (M-2)}{M-1}-1 \\
\end{array}
\right),
U = \left(
\begin{array}{ccccc}
 0 & -1 & 0 & 0 & 0 \\
 \frac{2 n}{(M-1) N}-1 & 0 & 0 & \frac{2 \sqrt{n (N-n)}}{(M-1) N} & \frac{2 \sqrt{\frac{(M-2) n}{N}}}{M-1} \\
 \frac{2 \sqrt{n (N-n)}}{(M-1) N} & 0 & 0 & \frac{2 (n-N)}{N-M N}-1 & \frac{2 \sqrt{\frac{(M-2) (N-n)}{N}}}{M-1} \\
 0 & 0 & 1 & 0 & 0 \\
 \frac{2 \sqrt{\frac{(M-2) n}{N}}}{M-1} & 0 & 0 & \frac{2 \sqrt{\frac{(M-2) (N-n)}{N}}}{M-1} & \frac{2 (M-2)}{M-1}-1 \\
\end{array}
\right).
\end{align*}
\end{widetext}

Using the similar method shown in case 1, we can calculate the probability of finding the walker at marked vertices after $t$ steps by computing the spectral decomposition of search operator $U$. 

The eigenvalues of $U$ are $-1,\frac{N - n}{N(M-1)} \pm \ii, 1 \pm \sqrt{\frac {2 n}{MN}} \ii$ in the limit of large-M and large-N, and the corresponding eigenstates are 
\begin{align*}
    -1:& \ket{v'_1} = (\frac 1 {\sqrt{2}},\frac 1 {\sqrt{2}},0,0,0)^\top ,\\
    \frac{n - N}{N(M-1)} + \ii:& \ket{v'_2} = (0,0,\frac 1 {\sqrt{2}},- \frac \ii {\sqrt{2}},0)^\top ,\\
    \frac{n - N}{N(M-1)} - \ii:& \ket{v'_3} = (0,0,\frac 1 {\sqrt{2}}, \frac \ii {\sqrt{2}},0)^\top ,\\
    1 + \sqrt{\frac {2 n}{MN}} \ii:& \ket{v'_4} = ( \frac \ii2 ,- \frac \ii 2,0,0,\frac 1 {\sqrt{2}})^\top ,\\
    1 - \sqrt{\frac {2 n}{MN}} \ii:& \ket{v'_5} = (- \frac \ii2 ,\frac \ii 2,0,0,\frac 1 {\sqrt{2}})^\top .
\end{align*}
Thus the initial state can be decomposed as
\begin{align*}
    \ket{\psi_0} \simeq |cc\rangle = \frac{1}{\sqrt 2} (\ket{v'_4} + \ket{v'_5}).
\end{align*}
So the state after $t$ steps is
\begin{align*}
\ket{\psi_t} = U^t \ket{\psi_0} 
\simeq \frac 1{\sqrt{2}}(\e^{i{\omega'} t} \ket{v'_4} +  \e^{-\ii{\omega'} t}\ket{v'_5})\\
= \frac 1{\sqrt{2}} \left( \begin{matrix}
    -\sin({\omega'} t)\\ \sin({\omega'} t)\\0\\0\\\sqrt{2}\cos({\omega'} t)
\end{matrix}\right),
\end{align*}
in which ${\omega'}$ is given by ${\omega'} = \arcsin(\sqrt{\frac {2 n}{MN}})$. 


The success probability of finding the walker at marked vertices after $t$ steps is given by
\begin{align*}
    p_{succ}(t) &= 1 - |\langle \psi_t | cc\rangle |^2 
 - |\langle \psi_t | bc \rangle |^2 - |\langle \psi_t | cb\rangle |^2\\ 
 &= 1 - \cos({\omega'} t)^2.
\end{align*}
When ${\omega'} t= \frac{\pi}{2}$, the probability of finding the walker at marked vertices after $t$ steps reaches the maximum.  The probability of finding the walker at marked vertices is $p_{succ}(t) = 1$. So we can choose the number of steps $t$ as the nearest number to $\frac{\pi}{2 \omega'}$ to achieve the maximum probability of finding the walker at marked vertices. Assuming $MN \gg n$, $t \sim \frac 1 {\omega'} = \frac{1}{\arcsin(\sqrt{\frac {2 n}{MN}})}\simeq \sqrt{\frac {MN}{2n}}$, thus the complexity is $O(\sqrt{\frac {MN}{n}})$, quantum walk search for this case retains quadratic speedup.

\section{Robust Quantum Walk On Complete Multipartite Graph}\label{sec:robust}

\subsection{Robust Quantum Walk Algorithm}

The method of adding parameters to the quantum operator is typically used to improve quantum algorithms, such as constructing deterministic algorithms\cite{li2023deterministic} or robust algorithms\cite{yoder_fixed-point_2014}.
Here, we fix the search algorithm mentioned before by adding parameters to query oracle $Q$ and the coin operator $C$. The new operators of the robust quantum search operator are
\begin{align*}
    U(\alpha, \beta) = SC(\alpha)Q(\beta),
\end{align*}
in which the coin operator $C$ is changed to
\begin{align*} C(\alpha)=\sum_u\left|u\right>\left<u\right|\otimes\left[\left(1-\e^{-\ii\alpha}\right)\left|s_u\right>\left<s_u\right|-I\right],
\end{align*}
and the query oracle $Q$ is changed to 
\begin{align*}
    Q(\beta)|uv\rangle=\begin{cases}
    {\e^{\ii\beta}|uv\rangle}&\mathrm{~if~}u\mathrm{~is~marked},\\
    |uv\rangle&\mathrm{~if~}u\mathrm{~is~not~marked}.
    \end{cases}
\end{align*}
Specifically, when $\alpha = \beta = \pm \pi$, the parameterized operators become the original operators.

\subsection{Case 1: Every Set Contains Marked Vertices}\label{sec:robust_case1}

The parameterized operators in subspace $span\{ \ket{aa}, \ket{ab}, \ket{ba}, \ket{bb}\}$ in matrix form are
\begin{equation}\begin{aligned}\label{eq:robustC}
&C(\alpha) = I \otimes \left(\begin{matrix}
    (1-\e^{-\ii\alpha}) \frac n N - 1 & (1-\e^{-\ii\alpha})  \frac {\sqrt{n {w}}} N\\
    (1-\e^{-\ii\alpha})  \frac {\sqrt{n {w}}} N & (1-\e^{-\ii\alpha}) \frac {w} N - 1
\end{matrix}\right) \\
&=  I \otimes \left(\begin{matrix}
    (1-\e^{-\ii\alpha}) \frac {1- \cos \omega} 2 - 1 & (1-\e^{-\ii\alpha})  \frac {\sin\omega} 2\\
    (1-\e^{-\ii\alpha})  \frac {\sin\omega} 2 & (1-\e^{-\ii\alpha})\frac {1+ \cos \omega} 2 - 1
\end{matrix}\right),
\end{aligned}\end{equation}

\begin{equation}\begin{aligned}\label{eq:robustQ}
Q(\beta) &= \left(
\begin{array}{cccc}
 \e^{\ii\beta} & 0 & 0 & 0 \\
 0 & \e^{\ii\beta} & 0 & 0 \\
 0 & 0 & 1 & 0 \\
 0 & 0 & 0 & 1 \\
\end{array}
\right),
\end{aligned}\end{equation}
where $\omega= \arccos\frac {{w} -n}{n+{w}}$.

\begin{theorem}\label{th:theorem1}
Consider a complete multipartite graph comprising $M$ subsets, each with $N$ vertices, where $n$ vertices in each subset are marked, and the value of $n$ is unknown.
Let $\epsilon$ be a number within the range $\left(0,1\right]$, and $t$ be an integer satisfying $t \ge \ln\left(\frac{2}{\sqrt{\epsilon}}\right) \sqrt{N} + 1$, there exist a series of parameters  $\alpha_1, \ldots, \alpha_t$ and $\beta_1, \ldots, \beta_{t+1}$, so that the probability of obtaining a marked vertex exceeds $1-\epsilon^2$ after the sequential application of unitary operations $U(\alpha_1, \beta_2)U(\alpha_1, \beta_1), U(\alpha_2, \beta_3)(\alpha_2, \beta_2), \ldots,$ $U(\alpha_t, \beta_{t+1})U(\alpha_t, \beta_t)$ to the uniform superposition initial state. 
\end{theorem}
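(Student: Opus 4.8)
The plan is to recognize the robust walk as an instance of fixed-point amplitude amplification \cite{yoder_fixed-point_2014} and to transfer their guarantee through a two-dimensional reduction. First I would work inside the four-dimensional invariant subspace $\mathrm{span}\{\ket{aa},\ket{ab},\ket{ba},\ket{bb}\}$ and write out the $4\times4$ matrix of $U(\alpha,\beta)=SC(\alpha)Q(\beta)$ from \eqref{eq:robustC} and \eqref{eq:robustQ} as an explicit function of $\omega$, $\alpha$, and $\beta$. Because a single shift $S$ exchanges $\ket{ab}\leftrightarrow\ket{ba}$ and moves between the even and odd sectors that produced the parity splitting of the non-robust analysis, a single step cannot stay in a fixed two-dimensional plane; this is the reason the theorem bundles the operators in pairs. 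I would therefore compute $U(\alpha,\beta')U(\alpha,\beta)$ and show that it preserves a two-dimensional subspace $\mathcal V$ spanned by one marked direction and the unmarked direction $\ket{bb}$, which carries almost all of $\ket{\psi_0}$ when $N\gg n$.

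The central step is to prove that, restricted to $\mathcal V$, the paired operator is unitarily equivalent to the generalized Grover iterate $-[I-(1-\e^{\ii\alpha})\ket{s}\bra{s}][I-(1-\e^{\ii\beta})\ket{T}\bra{T}]$ of the fixed-point scheme, in which the rotation angle is pinned by $\omega$ (equivalently by the overlap $\lambda=1-(w/N)^2\simeq 2n/N$ of $\ket{\psi_0}$ with the marked subspace) while the two freely adjustable phases are exactly $\alpha$ and $\beta$. I would stress that $\alpha$ and $\beta$ enter only through $C(\alpha)$ and $Q(\beta)$ and are therefore set independently of the unknown $n$; only the rotation angle depends on $n$. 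This is the configuration that \cite{yoder_fixed-point_2014} is built to exploit, since their phase schedule depends solely on the desired accuracy and the number of iterations, never on the true overlap.

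With the equivalence established, I would import the explicit phase sequence of \cite{yoder_fixed-point_2014} and match it to $\alpha_1,\dots,\alpha_t$ and $\beta_1,\dots,\beta_{t+1}$; the palindromic bookkeeping in which $\alpha_k$ is held fixed across a pair and each interior $\beta_k$ is reused in two adjacent pairs is precisely their symmetric schedule. To cope with the unknown $n$, I would calibrate the schedule to the smallest admissible overlap, namely $n=1$ with $\lambda\ge\lambda_{\min}=\Theta(1/N)$; the fixed-point property then forces $p_{succ}>1-\epsilon^2$ for every actual $n\ge 1$, which is the robustness asserted. It remains to count iterations: the sufficient condition $t=O\!\left(\ln(1/\epsilon)/\sqrt{\lambda_{\min}}\right)$ of the fixed-point search, after substituting $\sqrt{\lambda_{\min}}\sim 1/\sqrt N$ and the target accuracy, yields the stated bound $t\ge\ln(2/\sqrt\epsilon)\sqrt N+1$ and hence the quadratic speedup; verifying the exact constant is a direct Chebyshev-polynomial estimate.

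The hard part will be the first step --- making the two-dimensional reduction rigorous and exact. One must control the large-$M$, large-$N$ approximations uniformly across all $t$ iterations so that the neglected $-1$- and $+1$-eigenvalue components, whose amplitudes are $O(\sqrt{n/N})$, do not accumulate over the walk, and one must check that the doubled-step, shared-$\beta$ product genuinely reproduces the symmetric phase schedule rather than some inequivalent composition of reflections. Once this faithful reduction is in place, the remainder is a direct invocation of the fixed-point theorem together with the overlap estimate $\lambda\simeq 2n/N$.
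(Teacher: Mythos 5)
Your proposal correctly identifies the target (reduce to the Yoder--Low--Chuang fixed-point iterate and import their phase schedule), but the reduction you describe is not the one that works, and the gap shows up precisely in the exponent of $\epsilon$. You propose that the paired operator $U(\alpha,\beta')U(\alpha,\beta)$ preserves a two-dimensional subspace $\mathcal V$ containing $\ket{bb}$ and one marked direction, and that restricting to $\mathcal V$ gives a single generalized Grover iterate. Such a subspace does not exist in general. The actual structure, which the paper exploits, is that the four-dimensional invariant subspace factors as a tensor product of two two-dimensional spaces in which $S$ is the swap, $C(\alpha)=I\otimes C_{sub}(\alpha)$ acts only on the coin factor, and $Q(\beta)=Q_{sub}(\beta)\otimes I$ acts only on the position factor. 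Consequently $C(\alpha)Q(\beta)=Q_{sub}(\beta)\otimes C_{sub}(\alpha)$ and $SC(\alpha)Q(\beta)S=C_{sub}(\alpha)\otimes Q_{sub}(\beta)$, so the paired operator is a tensor product $\bigl(C_{sub}(\alpha)Q_{sub}(\beta)\bigr)\otimes\bigl(Q_{sub}(\beta')C_{sub}(\alpha)\bigr)$ of two \emph{different} two-dimensional operators. Its invariant subspaces are spanned by products of eigenvectors of the two factors, and the initial state $\ket{\bar 0}_2\otimes\ket{\bar 0}_2$ generically overlaps all four of them; no two-dimensional reduction through $\ket{bb}$ is available, and the parity/pairing phenomenon you correctly flag is resolved by the swap structure, not by finding an invariant plane.

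This matters quantitatively. A single two-dimensional fixed-point search with $t\ge\ln(2/\sqrt\epsilon)\sqrt N+1$ steps gives failure probability at most $\epsilon$, hence success probability $1-\epsilon$ --- which is what your route would deliver at best (and is what the paper proves for Case 2, Theorem~2). The $1-\epsilon^2$ bound of Theorem~1 comes from the tensor factorization: the $2t$-step state is an exact product of two independent two-dimensional fixed-point evolutions, the bad state is $\ket{bb}=\ket{\bar 0}_2\otimes\ket{\bar 0}_2$, and therefore the failure probability is a \emph{product} of two factors each bounded by $\epsilon$. Without that factorization you cannot reach $\epsilon^2$ at the stated step count; you would have to run the schedule to accuracy $\epsilon^2$, which changes $t$ to roughly $\ln(2/\epsilon)\sqrt N$. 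Your remaining ingredients --- the overlap estimate $\lambda\simeq 2n/N$ being worst at $n=1$, the phase schedule depending only on $\epsilon$ and $t$ and not on $n$, and the Chebyshev step-count estimate --- are all consistent with the paper, but the core reduction needs to be replaced by the Kronecker-product argument before they can be applied.
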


Then we give a proof of Theorem~\ref{th:theorem1}. We can transform the four-dimensional operator $U(\alpha, \beta)$ into a two-dimensional operator by using the Kronecker product of two-dimensional matrices. Then it becomes easy to analyze the success probability of the quantum walk algorithm. 

\begin{proof}[Proof of Theorem~\ref{th:theorem1}]
Consider Eq.\eqref{eq:robustC} and Eq.\eqref{eq:robustQ} we can find that the parameterized coin operator $C(\alpha)$ only acts on the coin register, and the parameterized query operator $Q(\beta)$ only acts on the position register. Therefore, both of them can be written as the Kronecker product of a two-dimensional matrix and an identity matrix.

The parameterized coin operator $C(\alpha)$ can be transformed into
\begin{equation}\begin{aligned}
    C(\alpha) &= I \otimes \left(\begin{matrix}
        (1-\e^{-\ii\alpha}) \frac n N - 1 & (1-\e^{-\ii\alpha})  \frac {\sqrt{n {w}}} N\\
        (1-\e^{-\ii\alpha})  \frac {\sqrt{n {w}}} N & (1-\e^{-\ii\alpha}) \frac {w} N - 1
    \end{matrix}\right) \\
    &= I \otimes C_{sub}(\alpha),
\end{aligned}\end{equation}
where $w$ is the number of marked vertices in each subset, and
\begin{align*}
    C_{sub}(\alpha) = \left(\begin{matrix}
        (1-\e^{-\ii\alpha}) \frac n N - 1 & (1-\e^{-\ii\alpha})  \frac {\sqrt{n {w}}} N\\
        (1-\e^{-\ii\alpha})  \frac {\sqrt{n {w}}} N & (1-\e^{-\ii\alpha}) \frac w N - 1
    \end{matrix}\right).
\end{align*}The parameterized query operator $Q(\beta)$ can be transformed into
\begin{equation}\begin{aligned}
    Q(\beta) &= \left(
        \begin{array}{cccc}
        \e^{\ii\beta} & 0 & 0 & 0 \\
        0 & \e^{\ii\beta} & 0 & 0 \\
        0 & 0 & 1 & 0 \\
        0 & 0 & 0 & 1 \\
        \end{array}
        \right) =\left(
        \begin{array}{cc}
        \e^{\ii\beta} & 0 \\
        0 & 1  \\
        \end{array}
        \right)  \otimes I = Q_{sub}(\beta) \otimes I,
\end{aligned}\end{equation}
where
\begin{align*}
    Q_{sub}(\beta) &= \left(
        \begin{array}{cc}
        \e^{\ii\beta} & 0 \\
        0 & 1  \\
        \end{array}
        \right).
\end{align*}

Then we can transform the four-dimensional operator $U(\alpha, \beta)$ into a two-dimensional operator by using the Kronecker product of two-dimensional matrices. After simple calculations, we can get
\begin{align*}
    C(\alpha) Q(\beta) = 
    Q_{sub}(\beta)\otimes C_{sub}(\alpha),
\end{align*}
\begin{align*} 
    &S C(\alpha) Q(\beta) S \\
    =& (S C(\alpha) S)\cdot(S Q(\beta) S) \\ 
    =& (C_{sub}(\alpha)\otimes I)\cdot(I \otimes Q_{sub}(\beta)) \\ 
    =& C_{sub}(\alpha)\otimes Q_{sub}(\beta).
\end{align*}

By using the above formula, we can analyze the state after $2t$ steps of the robust quantum walk algorithm by splitting the effect of quantum walk operators on the position register and coin register. The state $|\psi_{2t}\rangle$ after $2t$ steps can be written as the tensor product of two two-dimensional vectors:

\begin{widetext}
    \begin{align*}
        |\psi_{2t}\rangle
        =& (U(\alpha_t, \beta_{t+1}) U(\alpha_t, \beta_t)) \cdot ...\cdot (U(\alpha_1, \beta_2) U(\alpha_1, \beta_1)) \ket{\bar 0}\\
        =& \prod_{t\ge i\ge 1}\left( U(\alpha_i, \beta_{i+1})\cdot U(\alpha_i, \beta_{i}) \right)\cdot \ket{\bar 0} \\
        =& \prod_{t\ge i\ge 1} \left( S C(\alpha_i) Q(\beta_{i+1})S C(\alpha_i) Q(\beta_{i}) \right) \ket{\bar 0} \\ 
        =& \prod_{t\ge i\ge 1} ((C_{sub}(\alpha_i)\otimes Q_{sub}(\beta_{i+1})) \cdot (Q_{sub}(\beta_{i})\otimes C_{sub}(\alpha_i)))\ket{\bar 0}\\ 
        =&\left( \prod_{t\ge i\ge 1} (C_{sub}(\alpha_i) Q_{sub}(\beta_i)) \ket{\bar 0}_2 \right)\otimes
        \left(\prod_{t\ge i\ge 1} (Q_{sub}(\beta_{i+1}) C_{sub}(\alpha_i))\ket{\bar 0}_2\right)\\
        =&\left( \prod_{t\ge i\ge 1} (C_{sub}(\alpha_i) Q_{sub}(\beta_i)) \ket{\bar 0}_2 \right)\otimes
        \left(\prod_{t\ge i\ge 1} (Q_{sub}(\beta_{i+1}) C_{sub}(\alpha_i)) Q_{sub}(\beta_1) \ket{\bar 0}_2\right)\\
        =& \left(\prod_{t\ge i\ge 1} (C_{sub}(\alpha_i) Q_{sub}(\beta_i)) \ket{\bar 0}_2 \right)\otimes
        \left( Q_{sub}(\beta_{t+1}) \prod_{t\ge i\ge 1} (C_{sub}(\alpha_i) Q_{sub}(\beta_i)) \ket{\bar 0}_2\right),
    \end{align*}
\end{widetext}
where $\ket{\bar 0}$ is a 4-dimensional vector $(0,0,0,1)^\top$, and $\ket{\bar 0}_2$ is a 2-dimensional vector $(0,1)^\top$, $\prod_{t\ge i \ge 1}$ means multiplying the item sequentially for index from $t$ to $1$, this notation is also used in other parts of this paper.

Next we just need to analyze the property of $\prod_{t\ge i\ge 1} (C_{sub}(\alpha_i) Q_{sub}(\beta_i)) \ket{\bar 0}_2$, where $C_{sub}(\alpha_i)$ and $Q_{sub}(\beta_i)$ are both 2-dimensional matrix. According to Ref.\cite{yoder_fixed-point_2014}, when $t$ satisfies $t \ge \ln\left(\frac{2}{\sqrt{\epsilon}}\right) \sqrt{N} + 1$, for $j = 1,2,...,t$ we can choose parameters $\alpha_j$ and $\beta_j$ satisfying
\begin{align*}
    \alpha_j=-\beta_{t-j+1}=2\mathrm{arccot} \biggl(\tan(2\pi j/L)\sqrt{1-\gamma^2}\biggr),
\end{align*}
where $L = 2t+ 1$ is an integer, and $\gamma = 1/T_{1/L}(1/\sqrt{\epsilon})$, $T_{L}(x)=\mathrm{cos}(L\mathrm{arccos}(x))$ is Chebyshev's polynomial, then we have
\begin{equation}\begin{aligned}\label{eq:robust_success_2x2}
    |\langle \bar 0 |_2 \prod_{t\ge i\ge 1} (C_{sub}(\alpha_i) Q_{sub}(\beta_i)) \ket{\bar 0}_2|^2 \le \epsilon.
\end{aligned}\end{equation}
    
The parameter $\beta_{t+1}$ does not affect the success probability and thus can be chosen arbitrarily.

From Eq.\eqref{eq:robust_success_2x2}, the upper bound of the failure probability $|\langle bb | \psi_{2t}\rangle|^2$ can be further calculated as follows:
\begin{align*}
    |\langle bb | \psi_{2t}\rangle|^2 
    =& \left(|\langle \bar 0 |_2 \prod_{t\ge i\ge 1} (C_{sub}(\alpha_i) Q_{sub}(\beta_i)) \ket{\bar 0}_2|^2\right) \\&\cdot \left(|\langle \bar 0 |_2 Q_{sub}(\beta_0) \prod_{t\ge i\ge 1} (C_{sub}(\alpha_i) Q_{sub}(\beta_i)) \ket{\bar 0}_2|^2\right) \\ 
    \le& \epsilon^2.
\end{align*}

Thus the success probability of the quantum walk search algorithm is
\begin{align*}
    p_{succ}(2t) = 1 - |\langle bb | \psi_{2t}\rangle|^2 \ge 1 - \epsilon^2.
\end{align*}

This establishes the proof of Theorem~\ref{th:theorem1}.
\end{proof}

\subsection{Case 2: One Set Contains Marked Vertices}\label{sec:robust_case2}

In this section, we will give a robust quantum walk algorithm for the case that the marked vertices only exist in the first set. We will simplify the operator matrix using an approximate method to analyze the state after $2t$ steps, thereby completing the proof. 

Theorem~\ref{th:theorem2} gives the main result of this section.

\begin{theorem}\label{th:theorem2}
    Consider a complete multipartite graph comprising $M$ subsets, each with $N$ vertices, where $n$ vertices in the first subset are marked, and the value of $n$ is unknown.
    Let $\epsilon$ be a number within the range $\left(0,1\right]$, and $t$ be an integer satisfying $t \ge \ln\left(\frac{2}{\sqrt{\epsilon}}\right) \sqrt{\frac{MN}2} + 1$, there exist a series of parameters $\alpha_1, \ldots, \alpha_t$ and $\beta_1, \ldots, \beta_{t+1}$, so that the probability of obtaining a marked vertex exceeds $1-\epsilon$ after the sequential application of unitary operations $U(\alpha_1, \beta_2)U(\alpha_1, \beta_1), U(\alpha_2, \beta_3)(\alpha_2, \beta_2), \ldots, U(\alpha_t, \beta_{t+1})U(\alpha_t, \beta_t)$ to the uniform superposition initial state.
\end{theorem}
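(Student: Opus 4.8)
The plan is to follow the strategy behind Theorem~\ref{th:theorem1}, reducing the parameterized walk to an effective two-dimensional fixed-point iteration to which the construction of Yoder et al.~\cite{yoder_fixed-point_2014} applies; the essential difference is that the reduction here will be asymptotic rather than an exact Kronecker factorization. First I would write $C(\alpha)$ and $Q(\beta)$ as $5\times5$ matrices in the basis $\{\ket{ca},\ket{ac},\ket{bc},\ket{cb},\ket{cc}\}$, obtained from the unparameterized $C$ and $Q$ by replacing each coefficient $2$ coming from $2\ket{s_u}\bra{s_u}$ with $1-\e^{-\ii\alpha}$ and the marking phase $-1$ with $\e^{\ii\beta}$. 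Passing to the large-$M$, large-$N$ regime, the entry $\tfrac{N-n}{N(M-1)}$ vanishes while the per-step coupling from $\ket{cc}$ into the marked states is of order $\sin\omega'$, with $\omega'=\arcsin\sqrt{2n/(MN)}$; up to the $O(1/\sqrt M)$ leakage addressed at the end, the block $span\{\ket{ca},\ket{ac},\ket{cc}\}$ containing the initial state $\ket{cc}$ is then approximately invariant. This matches the unperturbed spectral picture, in which only $\ket{v'_4}$ and $\ket{v'_5}$, both supported on $\ket{ca},\ket{ac},\ket{cc}$, are excited by $\ket{\psi_0}\simeq\ket{cc}$.

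Within this block I would exploit the shift structure $S:\ket{ca}\leftrightarrow\ket{ac},\,\ket{cc}\mapsto\ket{cc}$. Since both $\ket{ca}$ and $\ket{ac}$ count as successes and the dynamics starting from $\ket{cc}$ populate only the antisymmetric marked direction $\tfrac{1}{\sqrt2}(\ket{ca}-\ket{ac})$ (the part appearing in $\ket{v'_4},\ket{v'_5}$), the relevant motion is confined to the two-dimensional subspace $span\{\ket{cc},\tfrac{1}{\sqrt2}(\ket{ca}-\ket{ac})\}$. On this subspace $S$ acts as $+1$ on $\ket{cc}$ and $-1$ on the marked direction, so pairing two steps restores a consistent rotation frame---this is why the statement is phrased through the double-step operator $U(\alpha_i,\beta_{i+1})U(\alpha_i,\beta_i)$. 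The aim of this step is to show that the double step then assumes the canonical generalized-Grover form, with the small angle $\omega'$ playing the role that $C_{sub}(\alpha)Q_{sub}(\beta)$ played in Case~1.

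I would then invoke the same parameter choice as in Theorem~\ref{th:theorem1}, $\alpha_j=-\beta_{t-j+1}=2\,\mathrm{arccot}\bigl(\tan(2\pi j/L)\sqrt{1-\gamma^2}\bigr)$ with $L=2t+1$ and $\gamma=1/T_{1/L}(1/\sqrt\epsilon)$, which forces the residual failure amplitude $|\langle cc|\psi_{2t}\rangle|$ of the reduced iteration to be at most $\sqrt\epsilon$ once $t\ge\ln(2/\sqrt\epsilon)\sqrt{MN/2}+1$; the threshold $\sqrt{MN/2}$ is the worst case $n=1$, where $\omega'\simeq\sqrt{2/(MN)}$. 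The key contrast with Case~1 is that the failure amplitude no longer factors over a genuine tensor product, so there is a single relevant two-dimensional factor rather than two: $|\langle cc|\psi_{2t}\rangle|^2\le\epsilon$, yielding $p_{succ}(2t)\ge1-\epsilon$ directly, rather than $1-\epsilon^2$.

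The step I expect to be the main obstacle is justifying the two-dimensional reduction, which is only approximate. A direct computation shows that a single step pumps amplitude of order $1/\sqrt M$ from $\ket{cc}$ into $span\{\ket{bc},\ket{cb}\}$, so this leakage is not negligible term-by-term. The saving feature is that the $\ket{bc},\ket{cb}$ modes carry eigenphase close to $\pi/2$ (eigenvalues $\approx\pm\ii$), far from the slow phase $\omega'\to0$ of the marked sector; the pumped amplitude therefore oscillates off-resonantly and cannot add coherently over the $O(\sqrt{MN})$ steps, so the $\ket{bc},\ket{cb}$ population should stay $O(1/M)$ throughout. Turning this averaging heuristic into a rigorous bound---through a perturbative block decomposition that exploits the spectral separation between the fast $(\approx\pm\ii)$ and slow $(\approx1)$ sectors---and then verifying that the surviving dynamics still track the Yoder iteration closely enough to preserve the $1-\epsilon$ guarantee, is the delicate core of the proof.
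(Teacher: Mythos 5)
Your overall architecture matches the paper's: reduce the five-dimensional walk to an effective two-dimensional Yoder-type iteration on a subspace containing $\ket{cc}$ and a single marked direction, invoke the fixed-point parameter sequence to get $|\langle cc|\psi_{2t}\rangle|^2\le\epsilon$, and control the leakage into $\ket{bc},\ket{cb}$ separately (the paper isolates this last point as Lemma~\ref{lemma:lemma1}). You also correctly identify why the bound is $1-\epsilon$ rather than $1-\epsilon^2$ and why the threshold scales as $\sqrt{MN/2}$. However, there are two genuine gaps. First, your leakage bound is left as an off-resonance heuristic, and as stated it is fragile: the parameters $\alpha_i,\beta_i$ vary from step to step along the Yoder schedule, so there is no fixed ``fast frequency'' against which the $O(1/\sqrt M)$ per-step pumping dephases. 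The paper instead proves Lemma~\ref{lemma:lemma1} algebraically: after projecting the bottom-right $3\times3$ block of the double-step operator with $P$, the resulting $2\times2$ matrices satisfy the composition law $\tilde{U'}(\alpha_1)\tilde{U'}(\alpha_2)\simeq\tilde{U'}(\alpha_1+\alpha_2)$, so the product of any number of them telescopes to a single $\tilde{U'}(\sum_i\alpha_i)$ whose coupling into $\ket{bc},\ket{cb}$ is $O(1/\sqrt M)$ uniformly in the number of steps; the inhomogeneous $\delta$-terms are then bounded the same way. Some argument of this uniform-in-$t$ type is needed, not just a single-step estimate plus averaging intuition.

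Second, your identification of the reduced marked sector is wrong in a way that matters. The parameterized query $Q(\beta)$ applies the phase $\e^{\ii\beta}$ to $\ket{ac}$ (walker at a marked vertex) but not to $\ket{ca}$, so it does not commute with the swap symmetry and the antisymmetric combination $\tfrac{1}{\sqrt2}(\ket{ca}-\ket{ac})$ is \emph{not} preserved by a step of the parameterized walk; that direction is only distinguished in the unparameterized ($\alpha=\beta=\pi$) spectral picture you are importing. The paper's actual device is to write the $3\times3$ block of the double step on $\{\ket{ca},\ket{ac},\ket{cc}\}$ as $\e^{-\ii\alpha}D_1(\beta_l)\,A(\alpha)\,D_2(\beta_r)$ with $A(\alpha)$ symmetric, regroup the diagonal factors across adjacent double steps into $D_{12}(\beta_i)=D_1(\beta_i)D_2(\beta_i)$ (which phases $\ket{ca}$ and $\ket{ac}$ identically), and only then project with $P$ onto $span\{\tfrac{1}{\sqrt2}(\ket{ca}+\ket{ac}),\ket{cc}\}$. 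Your remark that ``pairing two steps restores a consistent rotation frame'' gestures at this, but without the explicit factorization and regrouping the two-dimensional reduction you invoke does not exist, and the Yoder bound cannot be applied.
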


The parameterized operators $U(\alpha, \beta)$ in the subspace $span\{ \ket{ca}, \ket{ac}, \ket{bc}, \ket{cb}, \ket{cc}\}$ in matrix form is
\begin{widetext}
\begin{tiny}    
\begin{align*}
U(\alpha,\beta) = \left(
\begin{array}{ccccc}
 0 & -\e^{-\ii (\alpha -\beta )} & 0 & 0 & 0 \\
 -1+\frac{\left(1-\e^{-\ii \alpha }\right) n}{(M-1) \text{N}} & 0 & 0 & \frac{\left(1+\e^{-\ii \alpha }\right) \sqrt{n (\text{N}-n)}}{(M-1) \text{N}} & \frac{\left(1+\e^{-\ii \alpha }\right) \sqrt{(M-2) n \text{N}}}{(M-1) \text{N}} \\
 \frac{\left(1+\e^{-\ii \alpha }\right) \sqrt{n (\text{N}-n)}}{(M-1) \text{N}} & 0 & 0 & -1+\frac{\left(1-\e^{-\ii \alpha }\right) (\text{N}-n)}{(M-1) \text{N}} & \frac{\left(1+\e^{-\ii \alpha }\right) \sqrt{\frac{(M-2) (\text{N}-n)}{\text{N}}}}{M-1} \\
 0 & 0 & -\e^{-\ii \alpha } & 0 & 0 \\
 \frac{\left(1+\e^{-\ii \alpha }\right) \sqrt{(M-2) n \text{N}}}{(M-1) \text{N}} & 0 & 0 & \frac{\left(1+\e^{-\ii \alpha }\right) \sqrt{\frac{(M-2) (\text{N}-n)}{\text{N}}}}{M-1} & -1+\frac{\left(1-\e^{-\ii \alpha }\right) (M-2)}{M-1} x\\
\end{array}
\right).
\end{align*}
\end{tiny}
\end{widetext}

First, we give a lemma which is useful for the proof of Theorem~\ref{th:theorem2}.

\begin{lemma}\label{lemma:lemma1}
    When $M,N$ are large enough and $k$ is any positive integer, $|\psi_{2k}\rangle = \prod_{k\ge i\ge 1}U(\alpha_i,\beta_{i+1}) \cdot U(\alpha_i, \beta_i) |\bar{0}\rangle$ satisfies $|\bra{\psi_{2k}}bc\rangle|^2 < \frac s M$, $|\bra{\psi_{2k}}cb\rangle|^2 < \frac s M$, where $s$ is a constant independent of $M,N,n$.
\end{lemma}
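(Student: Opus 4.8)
The plan is to split the five-dimensional invariant subspace into a \emph{core} block $\mathcal C=\mathrm{span}\{\ket{ca},\ket{ac},\ket{cc}\}$ and a \emph{leakage} block $\mathcal L=\mathrm{span}\{\ket{bc},\ket{cb}\}$, and to show that the amplitude that ever reaches $\mathcal L$ is of order $1/\sqrt M$. Reading off the matrix of $U(\alpha,\beta)$, the only entries connecting the two blocks are those coupling $\ket{cc}$ with $\ket{bc}$ and $\ket{cb}$ with $\ket{cc}$ (both of size $\Theta(1/\sqrt M)$), together with the $\ket{ca}\leftrightarrow\mathcal L$ entries of size $\Theta(1/M)$; all remaining off-block entries vanish. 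Writing $U=\left(\begin{smallmatrix}U_{\mathcal C\mathcal C}&U_{\mathcal C\mathcal L}\\U_{\mathcal L\mathcal C}&U_{\mathcal L\mathcal L}\end{smallmatrix}\right)$ in block form, one therefore has $\|U_{\mathcal C\mathcal L}\|,\|U_{\mathcal L\mathcal C}\|=O(1/\sqrt M)$, while $U_{\mathcal L\mathcal L}=\left(\begin{smallmatrix}0&-1\\-\e^{-\ii\alpha}&0\end{smallmatrix}\right)$ is, up to an $O(1/M)$ correction, unitary with eigenvalues $\pm\e^{-\ii\alpha/2}$ of modulus one.

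First I would use the discrete variation-of-constants (Duhamel) identity. Since the initial vector $\ket{\bar 0}=\ket{cc}$ lies entirely in $\mathcal C$, iterating the block recursion expresses the leakage part of the state after $2k$ steps as a sum $P_{\mathcal L}\ket{\psi_{2k}}=\sum_{j}\big(\text{leakage propagator from }j\text{ to }2k\big)\,U_{\mathcal L\mathcal C}\,c_j$, where $c_j=P_{\mathcal C}\ket{\psi_j}$ is the core part and each injection carries the factor $\|U_{\mathcal L\mathcal C}\|=O(1/\sqrt M)$. To leading order the core evolves under $U_{\mathcal C\mathcal C}$ alone, because the feedback from $\mathcal L$ into $\mathcal C$ contributes only at order $\|U_{\mathcal C\mathcal L}\|\cdot\|P_{\mathcal L}\psi\|=O(1/M)$; hence $c_j$ is a superposition of modes $\mu^{\,j}$ whose frequencies $\mu$ are the eigenvalues of $U_{\mathcal C\mathcal C}$, clustered near $\pm 1$ (these are exactly the near-$\pm1$ core eigenvalues already seen in the non-robust analysis of case 2).

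The decisive step is to show that this sum does \emph{not} accumulate over the $k=O(\sqrt{MN})$ steps. Factoring out the leakage propagator turns the sum into a geometric-type series in the ratios $\mu/\lambda$, with $\mu\approx\pm1$ the core frequencies and $\lambda=\pm\e^{-\ii\alpha/2}$ the leakage frequencies. Because $\lambda$ stays bounded away from $\pm1$ on the unit circle, a constant spectral gap $\Delta=\min|1-\mu/\lambda|>0$ holds, and each such series is bounded by $O(1/\Delta)=O(1)$ uniformly in $k$. Multiplying by the injection amplitude $O(1/\sqrt M)$ yields $\|P_{\mathcal L}\ket{\psi_{2k}}\|=O(1/\sqrt M)$, so both $|\langle bc|\psi_{2k}\rangle|^2$ and $|\langle cb|\psi_{2k}\rangle|^2$ are at most $s/M$ for a constant $s$ independent of $M,N,n$, which is the claim.

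The main obstacle is precisely this non-resonance, uniform-in-time estimate: a crude triangle inequality gives only $O(k/\sqrt M)$, which diverges, so one must genuinely exploit the constant gap between the leakage frequencies $\pm\e^{-\ii\alpha/2}$ and the core frequencies near $\pm1$ to sum the series. Two technical points need care. First, the parameters $(\alpha_i,\beta_i)$ vary from step to step, so the propagators are time-ordered products rather than single powers, and one must check that the gap persists uniformly over the whole Yoder schedule (the resonant value $\alpha\equiv0\ (\mathrm{mod}\ 2\pi)$ must be avoided, at least in a summable sense). Second, one must justify the bootstrap that replaces the exact core evolution by $U_{\mathcal C\mathcal C}$ and control the neglected back-action at order $1/M$. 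Accommodating the time dependence will most likely require a summation-by-parts (Abel) estimate in place of the clean geometric series, with the gap $\Delta$ again furnishing the bounded denominator.
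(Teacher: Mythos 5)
Your block split into a core $\{\ket{ca},\ket{ac},\ket{cc}\}$ and a leakage $\{\ket{bc},\ket{cb}\}$, together with the discrete Duhamel expansion, is a sensible way to organize the estimate, and you correctly identify the crux: the naive triangle inequality gives $O(k/\sqrt M)$, which diverges, so the sum over injection times must cancel. The gap is in the mechanism you invoke for that cancellation. You assert a \emph{constant} spectral gap $\Delta=\min|1-\mu/\lambda|>0$ between the leakage frequencies $\lambda=\pm\e^{-\ii\alpha/2}$ and core frequencies $\mu$ near $\pm1$. For the Yoder schedule $\alpha_j=2\,\mathrm{arccot}\bigl(\tan(2\pi j/L)\sqrt{1-\gamma^2}\bigr)$ the angles $\alpha_j$ come arbitrarily close to $0$ (near $j\approx L/4$ and $3L/4$), where $\pm\e^{-\ii\alpha_j/2}$ collides with $\pm1$ and with the $\ket{cc}$ frequency $\approx-\e^{-\ii\alpha_j}$; there is no uniform gap, and your parenthetical hope that resonance is ``avoided in a summable sense'' is not borne out --- the schedule genuinely sweeps through the resonant region. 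What actually saves the bound is a compensation your sketch never uses: every core--leakage coupling carries the same factor $(1-\e^{-\ii\alpha})$ that controls the gap (it must vanish at $\alpha=0$, since $C(0)=-I$; the $(1+\e^{-\ii\alpha})$ in the displayed matrix is a sign slip, as the later approximation of $\tilde U(\alpha)$ confirms), so the ratio coupling/gap stays $O(1/\sqrt M)$ even as both tend to zero. Without tracking this $\alpha$-dependence, your geometric/Abel series is unbounded at the near-resonant steps and the argument stalls.

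The paper exploits this compensation without any gap condition: after symmetrizing $\ket{bc}$ and $\ket{cb}$ via the projector $P$, the relevant block of the \emph{two-step} operator is, to leading order, a partial reflection $I-(1-\e^{\ii\alpha})\ket{u}\bra{u}$ about the fixed vector $\ket{u}=(\sqrt{2/M},\sqrt{1-2/M})^\top$. These operators commute for different $\alpha$ and compose as $\tilde U'(\alpha_1)\tilde U'(\alpha_2)\simeq\tilde U'(\alpha_1+\alpha_2)$, so the whole time-ordered product collapses to a single partial reflection with angle $\sum_j\alpha_j$, whose $bc,cb$ components are $|1-\e^{\ii\sum_j\alpha_j}|\cdot O(1/\sqrt M)\le 2/\sqrt M$ for every $k$; the time dependence is handled exactly and no non-resonance assumption is needed. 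Repairing your proof essentially forces you to rediscover this composition law. You would also still need a quantitative treatment of the weak $O(\sqrt{n/N}/M)$ feed from $\ket{ca},\ket{ac}$ into the leakage block (the paper's $\delta$ terms); your ``bootstrap at order $1/M$'' remark does not control its accumulation over $k=O(\sqrt{MN})$ steps.
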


\begin{remark}
\textnormal{
    Lemma~\ref{lemma:lemma1} implies that throughout the entire walk process, the amplitude of $|bc\rangle, |cb\rangle$ are always small enough. This ensures the reasonableness of omitting the 3rd and 4th dimensions of the matrix in the proof of Theorem~\ref{th:theorem2}.
}
\end{remark}

Next, we will give a proof of Lemma~\ref{lemma:lemma1}.
\begin{proof}[Proof of Lemma~\ref{lemma:lemma1}]
Let
\begin{equation}\begin{aligned}\label{eq:a12345}
    \left(\begin{matrix}
    a_0^1 \\ a_0^2 \\ a_0^3 \\ a_0^4 \\ a_0^5
    \end{matrix}\right) &= \left(\begin{matrix}
    0 \\ 0 \\ 0\\ 0 \\ 1
    \end{matrix}\right),\\
    \left(\begin{matrix}
    a_i^1 \\ a_i^2 \\ a_i^3 \\ a_i^4 \\ a_i^5
    \end{matrix}\right) &= \e^{\ii \alpha_i}U(\alpha_i,\beta_{i+1})\cdot U(\alpha_i,\beta_i)  \left(\begin{matrix}
    a_{i-1}^1 \\ a_{i-1}^2 \\ a_{i-1}^3 \\ a_{i-1}^4 \\ a_{i-1}^5
    \end{matrix}\right),
\end{aligned}\end{equation}
where the notation $a_i^j$ means the $j$-th row of the state vector after $2i$-th step walk.
And let $\tilde{U}(\alpha)$ be the $3\times 3$ submatrix in the bottom right corner of $\e^{\ii\alpha}U(\alpha,\beta_l) \cdot U(\alpha,\beta_r)$. Then we have
\begin{widetext}\begin{tiny}
\begin{align*}
    \tilde{U}(\alpha) &= \left(
        \begin{array}{ccc}
         1-\frac{\left(1-\e^{-\ii \alpha }\right) (\text{N}-n)}{(M-1) \text{N}} & \frac{\e^{-\ii \alpha } \left(-1+\e^{\ii \alpha }\right)^2 (M-2) (\text{N}-n)}{(M-1)^2 \text{N}} & -\frac{\left(1-\e^{-\ii \alpha }\right) \left(\e^{\ii \alpha }+M-2\right) \sqrt{\frac{(M-2) (\text{N}-n)}{\text{N}}}}{(M-1)^2} \\
         0 & 1-\frac{\left(1-\e^{-\ii \alpha }\right) (\text{N}-n)}{(M-1) \text{N}} & -\frac{\left(1-\e^{-\ii \alpha }\right) \sqrt{\frac{(M-2) (\text{N}-n)}{\text{N}}}}{M-1} \\
         -\frac{\left(1-\e^{-\ii \alpha }\right) \sqrt{\frac{(M-2) (\text{N}-n)}{\text{N}}}}{M-1} & \frac{- \left(1-\e^{-\ii \alpha }\right) \left(\e^{\ii \alpha }+M-2\right) \sqrt{\frac{(M-2) (\text{N}-n)}{\text{N}}}}{(M-1)^2} & \frac{\e^{-\ii \alpha } \left(\e^{\ii \alpha }+M-2\right)^2}{(M-1)^2} \\
        \end{array}
        \right)\\ 
    &\simeq \left(
        \begin{array}{ccc}
         1 & 0 & -(1 - \e^{-\ii \alpha }) \frac{1}{\sqrt{M}} \\
         0 & 1 & -(1 - \e^{-\ii \alpha }) \frac{1}{\sqrt{M}} \\
         -(1 - \e^{-\ii \alpha }) \frac{1}{\sqrt{M}} & -(1 - \e^{-\ii \alpha }) \frac{1}{\sqrt{M}} & \e^{-\ii \alpha }
        \end{array}
        \right).
\end{align*}
\end{tiny}\end{widetext}
Define
\begin{equation}\begin{aligned}\label{eq:P}
    P= \left(\begin{matrix}
        1/\sqrt 2 & 1/\sqrt 2 & 0 \\
        0 & 0 & 1 
        \end{matrix}\right),
\end{aligned}\end{equation}
\begin{equation}\begin{aligned}\label{eq:U'}
    \tilde{U'}(\alpha) &= P \tilde{U}(\alpha) P^\top \\
    &\simeq I - (1-\e^{\ii \alpha}) \left(\begin{matrix}
        \sqrt{\frac 2 M}\\ \sqrt{1 - \frac 2 M}
    \end{matrix}\right)\left(\begin{matrix}
        \sqrt{\frac 2 M}& \sqrt{1 - \frac 2 M}
    \end{matrix}\right),
\end{aligned}\end{equation}
where $\tilde{U'}(\alpha)$ is a 2-dimensional matrix, we can prove through Eq.\eqref{eq:U'} that for arbitrary $\alpha_1, \alpha_2\in \mathbb{R}$ $\tilde{U'}(\alpha)$ satisfies 
\begin{equation}\begin{aligned}\label{eq:U'alpha}
    \tilde{U'}(\alpha_1) \tilde{U'}(\alpha_2) \simeq \tilde{U'}(\alpha_1 + \alpha_2).
\end{aligned}\end{equation}

From Eq.\eqref{eq:U'alpha}, the product of multiple $\tilde{U'}(\alpha)$ can be calculated:
\begin{equation}
    \begin{aligned}
        \prod_{k\ge i \ge 1} \tilde{U'}(\alpha_i) \simeq \tilde{U'}(\sum_{i=1}^k \alpha_i).
    \end{aligned}
\end{equation}

Next, we can compute $a_k^3, a_k^4, a_k^5$ by substituting the expression of $U(\alpha, \beta)$ into Eq.\eqref{eq:a12345}. 
\begin{equation}
\begin{aligned}\label{eq:ak}
    \left(\begin{matrix}
    a_k^3 \\ a_k^4 \\ a_k^5
    \end{matrix}\right) 
    =& \tilde{U}(\alpha_k)\cdot\left(\begin{matrix}
     a_{k-1}^3 \\ a_{k-1}^4 \\ a_{k-1}^5
    \end{matrix}\right) + \left(\begin{matrix}
       \delta_k^3\\\delta_k^4\\\delta_k^5
    \end{matrix}\right) \\
    =& \prod_{k\ge j \ge 1} \tilde{U}(\alpha_j) \cdot \left(\begin{matrix}
       0\\0\\1
    \end{matrix}\right) + \sum_{i=1}^k \prod_{k\ge j \ge i+1} \tilde{U}(\alpha_j) \cdot \left(\begin{matrix}
       \delta_i^3\\\delta_i^4\\\delta_i^5
       \end{matrix}\right),
\end{aligned}
\end{equation}
where the \(\delta\)-related terms in the above equation arise from the elements of the 3-row by 2-column submatrix located in the bottom-left corner of the matrix $\e^{\ii \alpha_i}U(\alpha_i, \beta_{i+1})U(\alpha_i, \beta_{i})$.

For the first half of Eq.\eqref{eq:ak}, as $\tilde{U}(\alpha)$ applying on vector with form $(a,a,b)^\top$ still returns vector with form $(a',a',b')^\top$ , and $P^\top P$ operating on $(a,a,b)^\top$ will not change the vector. Thus we can insert $P^\top P$ items between the multiple of $\tilde{U}(\alpha)$ without changing the result, thus 
\begin{equation}\begin{aligned}\label{eq:insert_P}
    \prod_{k\ge j \ge 1} \tilde{U}(\alpha_j) \cdot \left(\begin{matrix}
        0\\0\\1
     \end{matrix}\right) &= \prod_{k\ge j \ge 1} (P^\top P \tilde{U}(\alpha_j)) \cdot P^\top \left(\begin{matrix}
        0\\1
     \end{matrix}\right)\\ 
     &= P^\top \prod_{k\ge j \ge 1}  (P \tilde{U}(\alpha_j) P^\top) \cdot \left(\begin{matrix}
        0\\1
     \end{matrix}\right)\\ 
     &= P^\top \prod_{k\ge j \ge 1}  \tilde{U'}(\alpha_j) \cdot \left(\begin{matrix}
        0\\1
     \end{matrix}\right)\\ 
     &= P^\top \tilde{U'}(\sum_{j=1}^k \alpha_j) \cdot \left(\begin{matrix}
        0\\1
     \end{matrix}\right)\\ 
     &= \left(\begin{matrix}
        -(1 - \e^{-\ii (\sum_{j=1}^k \alpha_j)  })\frac{1}{\sqrt{M}} \\ -(1 - \e^{-\ii (\sum_{j=1}^k \alpha_j)  }) \frac{1}{\sqrt{M}} \\ \e^{-\ii (\sum_{j=1}^k \alpha_j) }
     \end{matrix}\right),
\end{aligned}\end{equation}
thus
\begin{equation}
    \begin{aligned}\label{eq:leftpart}
        \prod_{k\ge j \ge 1} \tilde{U}(\alpha_j) \cdot \left(\begin{matrix}
            0\\0\\1
         \end{matrix}\right) = \left(\begin{matrix}
            o(1/\sqrt M)\\ o(1/\sqrt M)\\ o(1)
         \end{matrix}\right).
    \end{aligned}
\end{equation}

For the latter half of Eq.\eqref{eq:ak} using a similar method, we can prove that
\begin{equation}
    \begin{aligned}\label{eq:rightpart}
    \sum_{i=1}^k \prod_{k\ge j \ge i+1} \tilde{U}(\alpha_j) \cdot \left(\begin{matrix}
        \delta_i^3\\\delta_i^4\\\delta_i^5
        \end{matrix}\right) = \left(\begin{matrix}
            o(1/\sqrt M)\\ o(1/\sqrt M)\\ o(1)
         \end{matrix}\right).
\end{aligned}
\end{equation}
The proof of Eq.\eqref{eq:rightpart} used the following facts
\begin{equation}
    \begin{aligned}\label{eq:delta}
    \delta_k^3 = o(\sqrt{\frac n {M^2 N}}), 
    \delta_k^4 = o(\sqrt{\frac n {M^2 N}}), 
    \delta_k^5 = o(\sqrt{\frac n {MN}}),
    \end{aligned}
\end{equation}
which is estimated from the elements of the 3-row by 2-column submatrix located in the bottom-left corner of $\e^{\ii\alpha_i}U(\alpha_i, \beta_{i+1})U(\alpha_i, \beta_{i})$.

Combining Eq.\eqref{eq:leftpart} and Eq.\eqref{eq:rightpart} we can get
\begin{equation}
    \begin{aligned}\label{eq:ak2}
        \left(\begin{matrix} 
        a_k^3 \\ a_k^4 \\ a_k^5
        \end{matrix}\right) 
        \simeq \left(\begin{matrix}
           o(1/\sqrt M)\\ o(1/\sqrt M)\\ o(1)
        \end{matrix}\right).
    \end{aligned}
\end{equation}
thus
\begin{align*}
    a_k^3 = o(1/\sqrt{M}),
    a_k^4 = o(1/\sqrt{M}).
\end{align*}
which means there exists a constant $s\in \mathbb{R}$ such that
\begin{align*}
    |\bra{\psi_{2k}}bc\rangle|^2 = |a_k^3|^2 < \frac s M,|\bra{\psi_{2k}}cb\rangle|^2=|a_k^4|^2 < \frac s M.
\end{align*}
\end{proof}

Next, we will give a proof of Theorem~\ref{th:theorem2} using Lemma~\ref{lemma:lemma1}.

\begin{proof}[Proof of Theorem~\ref{th:theorem2}]
From Lemma~\ref{lemma:lemma1} we know that when $M,N$ are large enough, the amplitude of the $3,4$ lines of the state vector are always small
enough. Intuitively, this can be explained as the amplitude of rows 3 and 4 not being amplified by the operator $U(\alpha,\beta)$. Thus we just need to consider $1,2,5$ lines of the state vector. The $3\times 3$ submatrix of $U(\alpha, \beta)$ after dimensionality reduction is 
\begin{small}\begin{align*}
    &U_{sub}(\alpha,\beta) \\=& \left(
\begin{array}{ccc}
 0 & -\e^{-\ii (\alpha -\beta )} & 0 \\
 -1+\frac{\left(1-\e^{-\ii \alpha }\right) n}{(M-1) \text{N}} & 0 & \frac{\left(1+\e^{-\ii \alpha }\right) \sqrt{(M-2) n \text{N}}}{(M-1) \text{N}} \\
 \frac{\left(1+\e^{-\ii \alpha }\right) \sqrt{(M-2) n \text{N}}}{(M-1) \text{N}} & 0 & -1+\frac{\left(1-\e^{-\ii \alpha }\right) (M-2)}{M-1} \\
\end{array}
\right)\\ 
\simeq& \left(
\begin{array}{ccc}
 0 & -\e^{-\ii (\alpha -\beta )} & 0 \\
 -1 & 0 & \frac{\left(1+\e^{-\ii \alpha }\right) \sqrt{(M-2) n \text{N}}}{(M-1) \text{N}} \\
 \frac{\left(1+\e^{-\ii \alpha }\right) \sqrt{(M-2) n \text{N}}}{(M-1) \text{N}} & 0 & -\e^{-\ii \alpha } \\
\end{array}
\right).
\end{align*}\end{small}

We found that the product of two operators $U_{sub}$ is nearly a symmetric matrix. Therefore, we consider analyzing the product of the three-dimensional matrix forms of two search operators, $U_{sub}(\alpha, \beta_l) \cdot U_{sub}(\alpha, \beta_r)$. The product matrix can be approximated as a product of a symmetric matrix and two diagonal matrices:
\begin{widetext}\begin{small}\begin{align*}
&U_{sub}(\alpha, \beta_l) \cdot U_{sub}(\alpha, \beta_r) \\ 
=& \left(
\begin{array}{ccc}
 \e^{-\ii (\alpha -\beta_l )} & 0 & -\frac{\left(-1+\e^{\ii \alpha }\right) \e^{\ii \beta_l -2 i \alpha } \sqrt{(M-2) n \text{N}}}{(M-1) \text{N}} \\
 \frac{\e^{-2 \ii \alpha } \left(-1+\e^{\ii \alpha }\right)^2 (M-2) n}{(M-1)^2 \text{N}} & \e^{-\ii (\alpha -\beta_r )} & -\frac{\e^{-2 \ii \alpha } \left(-1+\e^{\ii \alpha }\right) \sqrt{(M-2) n \text{N}}}{(M-1) \text{N}} \\
 -\frac{\e^{-2 \ii \alpha } \left(-1+\e^{\ii \alpha }\right) \sqrt{(M-2) n \text{N}}}{(M-1) \text{N}} & -\frac{\left(-1+\e^{\ii \alpha }\right) \e^{\ii \beta_r -2 i \alpha } \sqrt{(M-2) n \text{N}}}{(M-1) \text{N}} & \e^{-2 \ii \alpha } \\
\end{array}
\right) \\ 
\simeq& \left(
\begin{array}{ccc}
 \e^{-\ii (\alpha -\beta_l )} & 0 & -\frac{\left(-1+\e^{\ii \alpha }\right) \e^{\ii \beta_l -2 i \alpha } \sqrt{(M-2) n \text{N}}}{(M-1) \text{N}} \\
0 & \e^{-\ii (\alpha -\beta_r )} & -\frac{\e^{-2 \ii \alpha } \left(-1+\e^{\ii \alpha }\right) \sqrt{(M-2) n \text{N}}}{(M-1) \text{N}} \\
 -\frac{\e^{-2 \ii \alpha } \left(-1+\e^{\ii \alpha }\right) \sqrt{(M-2) n \text{N}}}{(M-1) \text{N}} & -\frac{\left(-1+\e^{\ii \alpha }\right) \e^{\ii \beta_r -2 i \alpha } \sqrt{(M-2) n \text{N}}}{(M-1) \text{N}} & \e^{-2 \ii \alpha } \\
\end{array}
\right)\\
=&
\e^{-\ii\alpha}
D_1(\beta_l)\cdot  A(\alpha) \cdot
D_2(\beta_r),
\end{align*}
\end{small}\end{widetext}
where $\mu =  \frac{\sqrt{(M-2) n \text{N}}}{(M-1) \text{N}}$, and
\begin{align*}
    D_2(\beta_r) = \left(\begin{array}{ccc} 
         1 & 0 & 0 \\
         0 & \e^{\ii \beta_r } & 0 \\
         0 & 0 & 1 \\
        \end{array}\right), 
    D_1(\beta_l) = \left(\begin{array}{ccc}
        \e^{\ii \beta_l} & 0 & 0 \\
        0 & 1 & 0 \\
        0 & 0 & 1 \\
    \end{array}\right),
\end{align*}
\begin{align*}
    A(\alpha) = \left(
    \begin{array}{ccc}
     1 & 0 & \mu(\e^{-\ii \alpha }-1) \\
     0 & 1 & \mu(\e^{-\ii \alpha }-1) \\
     \mu(\e^{-\ii \alpha }-1)  & \mu(\e^{-\ii \alpha }-1) & \e^{-\ii \alpha } \\
    \end{array}
    \right).
\end{align*}

Applying a series of operators $U_{sub}(\alpha_i,\beta_{i+1}) U_{sub}(\alpha_i,\beta_i)(i = 1,...,t)$ on the initial state $\ket{\bar 0}_3$ we can get
\begin{widetext}\begin{align*}
    &\prod_{t\ge i\ge 1} U_{sub}(\alpha_i, \beta_{i+1}) \cdot U_{sub}(\alpha_i, \beta_i) \ket{\bar 0}_3 \\ 
    &= \prod_{t\ge i\ge 1} (\e^{-\ii\alpha_i}
        D_1(\beta_{i+1})\cdot 
        A(\alpha_i)
        \cdot
        D_2(\beta_{i})) \ket{\bar 0}_3 \\
    &= \left(\prod_{t\ge i\ge 1} \e^{-\ii\alpha_i}\right)   
        D_1(\beta_{t+1}) \left( \prod_{t\ge i\ge 1} ( 
        A(\alpha_i) \cdot D_2(\beta_{i}) \cdot D_1(\beta_i)) \right) (D_1(-\beta_{1})\ket{\bar 0}_3 )\\
    &= \left(\prod_{t\ge i\ge 1} \e^{-\ii\alpha_i}\right)   
        D_1(\beta_{t+1}) \left( \prod_{t\ge i\ge 1} ( 
        A(\alpha_i) \cdot D_2(\beta_{i}) \cdot D_1(\beta_i)) \right) \ket{\bar 0}_3,
\end{align*}\end{widetext}
where 
\begin{align*}
    \left(\prod_{t\ge i\ge 1} \e^{-\ii\alpha_i}\right)   
        D_1(\beta_{t+1})
\end{align*}
doesn't change the probability of measurement, so we only need to analyze the remaining part.

Let $D_{12}(\beta) = D_1(\beta)\cdot D_2(\beta)$,
\begin{align*}
    A(\alpha) \cdot D_2(\beta) \cdot D_1(\beta) = A(\alpha) \cdot D_{12}(\beta).
\end{align*}

Since the operator above acts on vectors of the form $(a,a,b)^\top$ to yield vectors of the form $(a',a',b')^\top$, and the initial state is $\ket{\bar{0}}_3 = (0,0,1)^\top$, which conforms to the form $(a,a,b)^\top$, we can use a similar method as in the proof of Lemma~\ref{lemma:lemma1} to project and reduce the dimension of the operator. Let
\begin{align*}
    C'_{sub}(\alpha) &= P\cdot A(\alpha) \cdot P^\top \\ 
    &= \left(\begin{array}{cc}
        1 & \sqrt 2\mu(\e^{-\ii \alpha }-1) \\
        \sqrt 2\mu(\e^{-\ii \alpha }-1) & \e^{-\ii \alpha } \\
        \end{array}\right), \\
    Q'_{sub}(\beta) &=P\cdot D_{12}(\beta) \cdot P^\top\\ 
    &= \left(
        \begin{array}{cc}
         \e^{\ii \beta } & 0 \\
         0 & 1 \\
        \end{array}
        \right),
\end{align*}
where $P$ is defined by Eq.\eqref{eq:P}.

Using the similar method in the proof of Lemma~\ref{lemma:lemma1}, we can compute the three-dimensional state vector after $2t$ steps of the robust quantum walk search as follows:
\begin{widetext}\begin{equation}\begin{aligned}
    \label{eq:psi_2t}
    \ket{\psi_{2t}}_3 &= \left(\prod_{t\ge i\ge 1} \e^{-\ii\alpha_i}\right)   
        D_1(\beta_{t+1}) \left( \prod_{t\ge i\ge 1} A(\alpha_i) \cdot D_{12}(\beta_{i}) \right) \ket{\bar 0}_3\\ 
    &= \left(\prod_{t\ge i\ge 1} \e^{-\ii\alpha_i}\right)   
        D_1(\beta_{t+1}) P^\top \left( \prod_{t\ge i\ge 1} P( 
        A(\alpha_i) \cdot P^\top P \cdot D_{12}(\beta_{i}))P^\top \right) P\ket{\bar 0}_3\\
    &= \left(\prod_{t\ge i\ge 1} \e^{-\ii\alpha_i}\right)   
        D_1(\beta_{t+1}) P^\top \left( \prod_{t\ge i\ge 1} C'_{sub}(\alpha)Q'_{sub}(\beta) \right) \ket{\bar 0}_2.
\end{aligned}\end{equation}\end{widetext}
The reason why the second ``=" of Eq.\eqref{eq:psi_2t} holds is similar to that of Eq.\eqref{eq:insert_P}.

From Ref.\cite{yoder_fixed-point_2014} we know that for $2\times2$ matrix $C'_{sub}(\alpha)Q'_{sub}(\beta)$, we have 
\begin{align*}
|\bra{\bar 0}_2\left( \prod_{t\ge i\ge 1} C'_{sub}(\alpha_i)Q'_{sub}(\beta_i) \right) \ket{\bar 0}_2|^2
    \le \epsilon.
\end{align*}
Thus the success probability of finding a marked vertex after $2t$ steps is
\begin{align*}
    p_{succ}(2t) =& 1 - |\langle \psi_{2t}| cc \rangle|^2 -  |\langle \psi_{2t}| cb \rangle|^2 -  |\langle \psi_{2t}| bc \rangle|^2\\ 
    \simeq& 1 - |\bra{\psi_{2t}}_3 \bar 0\rangle_3|^2\\ 
    =& 1 - |\bra{\bar 0}_2\left( \prod_{t\ge i\ge 1} C'_{sub}(\alpha)Q'_{sub}(\beta) \right) \ket{\bar 0}_2|^2\\ 
    \ge & 1 - \epsilon.
\end{align*}

This accomplishes the proof of Theorem~\ref{th:theorem2}.
\end{proof}

\section{Simulation}\label{sec:simulation}
This section presents numerical simulation results of quantum walk search algorithms on complete multipartite graphs. We demonstrate the simulation results for standard quantum walks and robust quantum walks, categorized into non-robust and robust algorithm cases. Moreover, we analyze these results and discuss their implications, aiming to validate the theoretical analyses presented in section \ref{sec:robust}.

Figure~\ref{fig:robust,case1,2} presents the simulation result of the first case, the subfigures have the same parameters $M=1000,N=10000,\epsilon=0.1$ but different numbers of marked vertices $n$, respectively $1$ and $10$. The horizontal axis represents the number of steps in the quantum walk, while the vertical axis represents the probability of finding the marked vertices. According to the simulation results, it is observed that beyond a certain number of steps, the final success probability oscillates within the range of $(1-\epsilon^2, 1)$. Similarly, Figure~\ref{fig:robust,case2,2} shows the result of case 2, demonstrating that the final success probability oscillates within the range of $(1-\epsilon, 1)$.

\begin{figure}[htbp]
    \centering
    \begin{subfigure}{0.4\textwidth}
        \includegraphics[width=\linewidth]{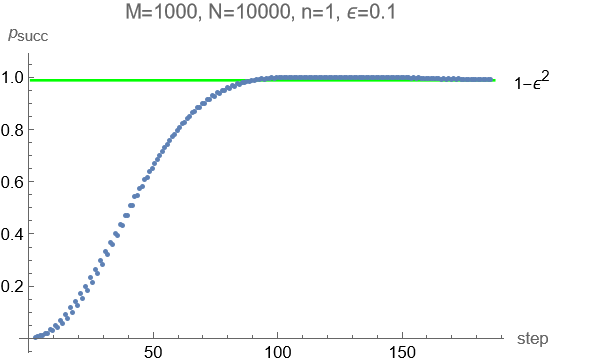}
    \end{subfigure}
    
    \begin{subfigure}{0.4\textwidth}
        \includegraphics[width=\linewidth]{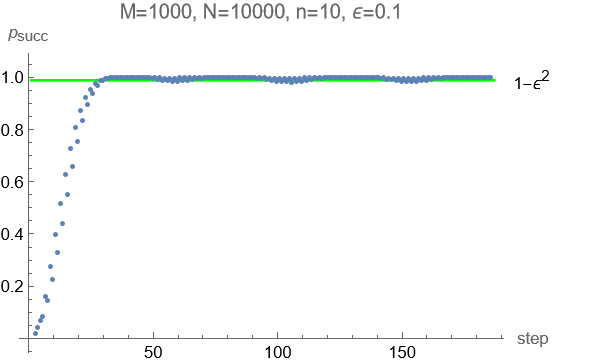}
    \end{subfigure}
    \caption{Robust quantum walk of case 1.}
    \label{fig:robust,case1,2}
\end{figure}

\begin{figure}[htbp]
    \centering
    \begin{subfigure}{0.4\textwidth}
        \includegraphics[width=\linewidth]{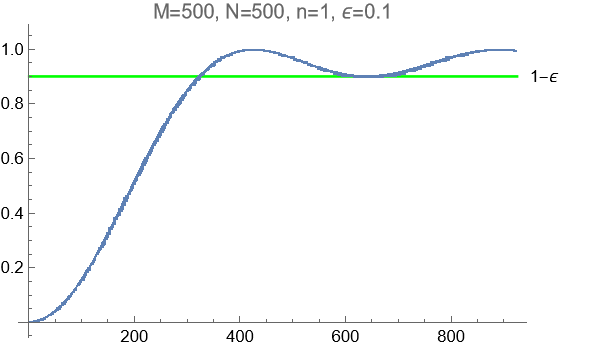}
    \end{subfigure}\hfill
    \begin{subfigure}{0.4\textwidth}
        \includegraphics[width=\linewidth]{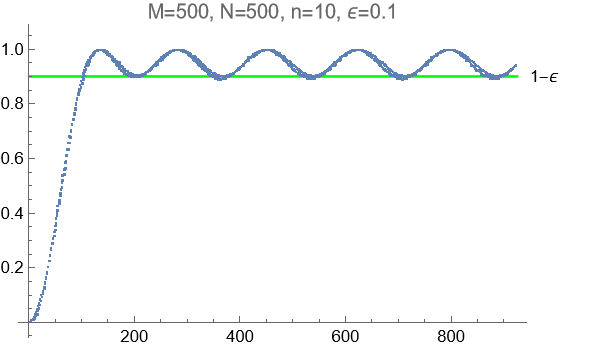}
    \end{subfigure}
    \caption{Robust quantum walk of case 2.}
    \label{fig:robust,case2,2}
\end{figure}

\section{Quantum Circuit Implementation}\label{sec: circuit}

In this section, we give the quantum circuit implementation of our algorithm. 
We assume $l_m = \lceil \log_2(M-1)\rceil, l_n = \lceil\log_2(N)\rceil$ for simplicity. If $M-1$ or $N$ is not a power of 2, we can add some imaginary vertices and edges into the graph. Then we can use $l_n$ qubits to encode the index of a vertex in a subset, and $l_m+1$ qubits to encode $M$ subsets. Thus we can use $l_m + l_n + 1$ qubits to represent a vertex in a complete multipartite graph, and $l_m + l_n$ qubits to represent a direction to go in the next step.

\begin{figure}
    \centering
    \includegraphics[width=\linewidth]{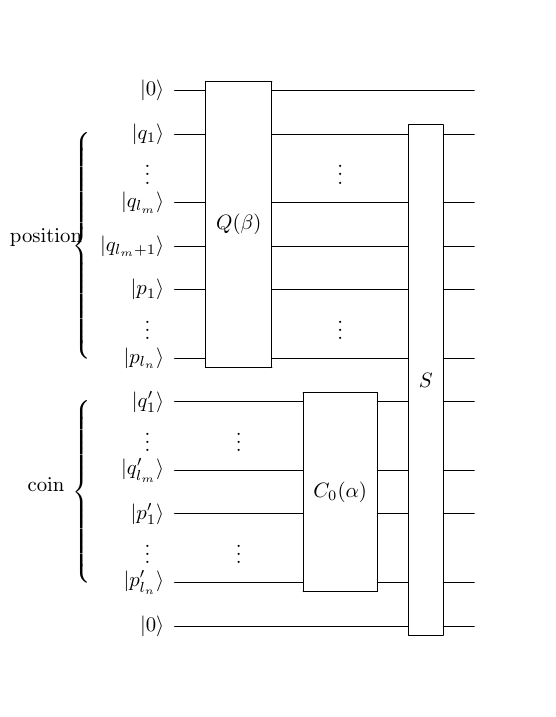}
    \caption{One step of robust quantum walk.}
    \label{fig:onestep} 
    \label{fig:enter-label}
\end{figure}

One step of the robust quantum walk is shown in Figure~\ref{fig:onestep}. $\ket{q_1...q_{l_m + 1}}$ represents the subset index of vertex, and $\ket{p_1...p_{l_n}}$ represents the index of vertex in one subset, these $l_m + l_n + 1$ qubits composed the position register. As for the coin register, $\ket{p'_1...p'_{l_n}}$ is the index of the vertex in a subset, but the $q'$ registers are a little complicated. For a vertex $v$ in a complete multipartite graph, there are $(M-1)N$ neighbors in $(M-1)$ subsets. Thus $\ket{q'_1...q'_{l_m}}$ means the index of the subset that contains the neighbors, which means we skipped the subset that contains vertex $v$ when indexing the coin register. 
For example, if the position register points to the $p_u$-th vertex of the $q_u$-th subset,  the coin register points to the $p_v$-th vertex of the $q_v$-th subset, the encode of this state is
\begin{align*}
    \begin{cases}
        \ket{q_u p_u, q_v p_v} & q_u > q_v,\\
        \ket{q_u p_u, (q_v-1) p_v} & q_u < q_v.
    \end{cases}
\end{align*}
As for the gates in this graph, $S$ is the shift operator, $C_0(\alpha)$ is the coin operator in this encoding, and $Q(\beta)$ is the query operator operating on the position register.

Operator $S$ will exchange the vertex in the position register and the coin register. Assuming we have a encoded state $\ket{q_u p_u, q_v p_v}$. According to the encoding rule, if $q_u \le q_v$, the corresponding subset index of $q_v$ is $q_v + 1$, then after exchanging them, the corresponding subset index of two registers are $q_v + 1$ and $q_u$, as $q_v + 1 > q_u$, the encoding of $q_u$ is still $q_u$. If $q_u > q_v$, the corresponding subset index of $q_v$ is still $q_v$, but when index after exchange, as $q_v < q_u$, the encode of $q_u$ is $q_u-1$. For the above reasons, $S$ acting on state $\ket{q_u p_u, q_v p_v}$ will output
\begin{align*}
    S \ket{q_u p_u, q_v p_v} = \begin{cases}
        \ket{(q_v+1) p_v, q_u p_u}& q_u \le q_v,\\
        \ket{q_v p_v, (q_u - 1) p_u}& q_u > q_v.
    \end{cases}
\end{align*}

The shift operator $S$ can be realized in this way. First, we exchange $\ket{p_v}$ and $\ket{p_u}$. Then we compare $q_u$ and $q_v$, add $1$ if $q_u \le q_v$, and subtract $1$ otherwise. Next, we exchange these two terms. At last, we make a comparison again to recover the auxiliary qubit to $\ket{0}$. In circuit design, to prevent overflow in addition, we perform the addition operation after swapping. The quantum circuit design of $S$ is shown in Figure~\ref{fig:S operator}. As shown in the graph, part \normalsize{\textcircled{\scriptsize{1}}}\normalsize\enspace is the circuit to exchange $p_v$ and $p_u$, and part \normalsize{\textcircled{\scriptsize{2}}}\normalsize\enspace is the circuit to process the qubits representing the index of subset in these two registers and to recover the auxiliary qubit. 
Figure~\ref{fig:compose_S} shows two operations to implement operator $S$. 

\begin{figure}[htbp]
    \centering
    \begin{subfigure}{0.3\textwidth}
        \includegraphics[width=\textwidth]{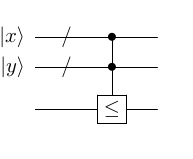}
        \caption{}
        \label{fig:le}
    \end{subfigure}
    \hfill
    \begin{subfigure}{0.3\textwidth}
        \includegraphics[width=\textwidth]{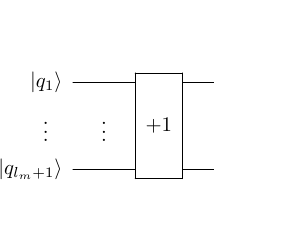}
        \label{fig:add1}
        \caption{}
    \end{subfigure}
    \caption{Two crucial circuit modules. (a)Compare two registers, and set the third register to $\ket{1}$ if $x \le y$, to $\ket{0}$ otherwise. If $x$ and $y$ has $l_m$ qubits, this gate can be composed of $O(l_m)$ fundamental gates\cite{oliveira2007quantum}. (b)Add $1$ to the register, this gate needs $O(l_m)$ basic gates to realize\cite{cuccaro2004new}. The $-1$ gate is similar.}
    \label{fig:compose_S}
\end{figure}

Recall that
\begin{align*}    C(\alpha)=\sum_u\left|u\right>\left<u\right|\otimes\left[\left(1-\e^{-\ii\alpha}\right)\left|s_u\right>\left<s_u\right|-I\right].
\end{align*}
But in our encoding, the vector $\ket{s_u}$ can be treated as the uniform superposition on the coin register $H^{\otimes (l_m + l_n)}\left|0\right>$ which is not related to $u$. So we can write the encoded coin operator as
\begin{align*}
    &\sum_u\left|u\right>\left<u\right|\otimes \left[\left(1-\e^{-\ii\alpha}\right)H^{\otimes (l_m + l_n)}\left|0\right>\left<0\right|H^{\otimes (l_m + l_n)}-I\right] \\ 
    =& I \otimes \left[ H^{\otimes (l_m + l_n)} ((1 - e^{-\ii \alpha})|0\rangle\langle0| - I) H^{\otimes (l_m + l_n)} \right]\\
    =& I \otimes C_0(\alpha).
\end{align*}
Then $C_0(\alpha)$ can be realized in Figure~\ref{fig:C operator}.
 
\begin{figure*}[htbp]
    \centering
    \begin{subfigure}{0.4\textwidth}
        \includegraphics[width=\linewidth]{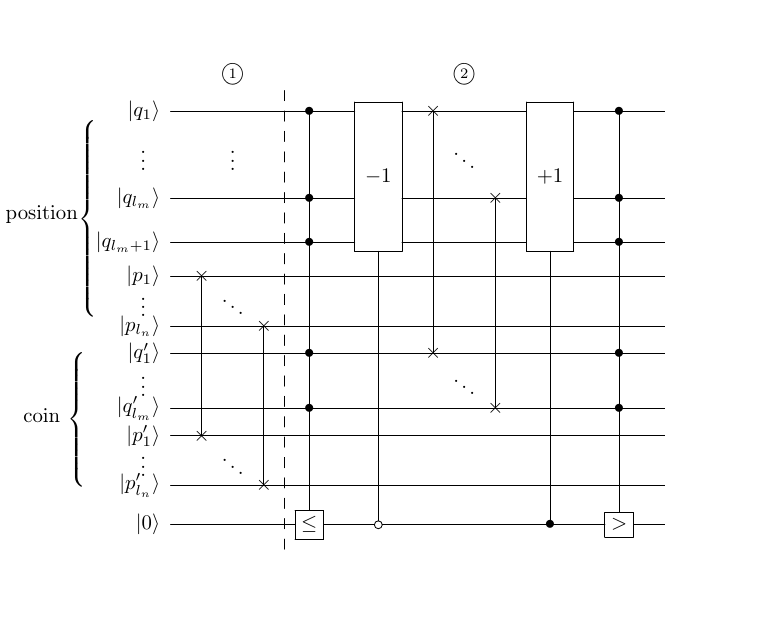}
        \caption{}
        \label{fig:S operator} 
    \end{subfigure}
    \hfill 
    \begin{subfigure}{0.29\textwidth}
        \includegraphics[width=\linewidth]{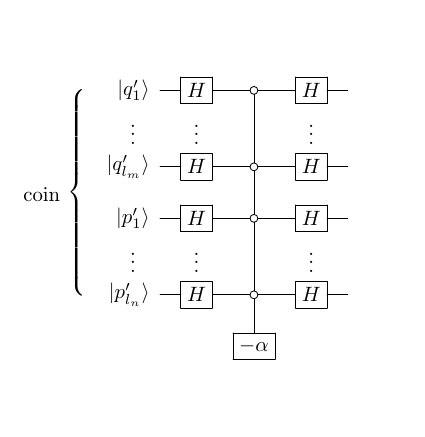}
        \caption{}
        \label{fig:C operator}
    \end{subfigure}
    \hfill 
    \begin{subfigure}{0.29\textwidth}
        \includegraphics[width=\linewidth]{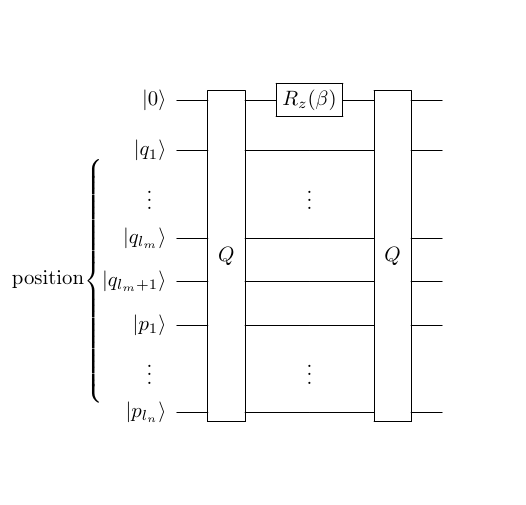}
        \caption{}
        \label{fig:Q operator}
    \end{subfigure}
    \caption{Quantum Circuit of $S,C_0(\alpha)$ and $Q(\beta)$. (a)Quantum Circuit of $S$. (b)Quantum Circuit of $C_0(\alpha)$. (c)Quantum Circuit of $Q(\beta)$. The operator $Q$ is the query operator. If the vertex in the position register is marked, then the auxiliary qubit will be set to $\ket{1}$.}
\end{figure*}

In addition, the query operator $Q(\beta)$ can be realized in Figure~\ref{fig:Q operator}.
Operator $Q$ is the oracle that distinguishes marked vertices from unmarked ones. $R_z(\beta)$ represents a rotation about the $z$ axis by angle $\beta$, and it's matrix form is
\begin{align*}
    R_z(\beta) = \left(\begin{matrix}
        1 & 0 \\ 0 & \e^{\ii \beta}
    \end{matrix}\right).
\end{align*}
The operator $Q(\beta)$ is implemented in the following steps. First, apply the operator $Q$ to set the auxiliary qubit to $\ket{1}$ if the vertex is marked, and to $\ket{0}$ otherwise. Next, a rotation operator is applied to the auxiliary qubit. Finally, apply the operator $Q$ again to reset the auxiliary qubit back to $\ket{0}$. 

The quantum circuit of one step quantum walk search is composed of $S,C_0(\alpha)$ and $Q(\beta)$ as shown in Figure~\ref{fig:onestep}. The number of qubits we need is $O(l_m + l_n) = O(\log(MN))$ which is the logarithm of the total number of vertices. 
The operator $S$ needs $l_m + l_n$ swap gates, two $l_m$ qubit compare gates, and two add or minus gates. These swap gates need $3(l_m + l_n)$ basic gates, the two compare gates needs $O(l_m)$ basic gates, the add gate and minus gate needs $O(l_m)$ basic gates, so it needs $3(l_m + l_n) + O(l_m) + O(l_m) = O(l_m + l_n) = O(\log(MN))$ gates in total to compose $S$ gate.
$C_0(\alpha)$ needs a $l_m + l_n$ qubits controlled phase shift gate and $2(l_m + l_n)$ Hadamard gates. The controlled phase gate takes $O((l_m + l_n)^2)$ basic gates\cite{saeedi2013linear}. Thus $C(\alpha)$ takes $2(l_m + l_n) +O((l_m + l_n)^2) = O(\log^2(MN))$ basic gates. 
The number of basic gates of $Q(\beta)$ needs depends on the query operator $Q$ noted as $C_Q$. 
According to the analysis above, the total gate number of one step is $\max(O(\log^2(MN)), C_Q)$.  

\section{Conclusion}\label{sec:conclusion}
In this paper, we first applied the coined quantum walk model to complete M-partite graph with multiple marked vertices in two cases and found a quadratic speedup over the classical search. 
Furthermore, we meticulously crafted robust quantum walk search algorithms tailored to two distinct scenarios involving marked vertices.  For both instances, we rigorously demonstrated that the success probability of locating the marked vertices approaches $1$ with negligible deviation, contingent upon executing a sufficiently large number of walk steps.  This pivotal finding underscores our ability to search the marked vertices with a high degree of certainty, even without prior knowledge of their exact count.  Remarkably, our algorithm not only achieves this but also retains a quadratic speedup over its classical counterparts, significantly enhancing search efficiency.  Additionally, we offer a comprehensive circuit implementation of our innovative algorithm, facilitating its practical application and further exploration.

There are more valuable research directions related to this problem that can be explored in the future.
For example, we can also study the robustness of the quantum walk search algorithm on complete M-partite graph with different numbers of vertices or different numbers of marked vertices in each set. Also, the robustness of the quantum walk search algorithm on other types of graphs is worth investigating. We will try to address these issues in future work.

\begin{acknowledgments}
This work was supported in part by the National Natural Science Foundation of China Grants No. 62325210, 62301531, and China Postdoctoral Science Foundation Grant No. 2022M723209.
\end{acknowledgments}

\bibliographystyle{apsrev4-2} 
\bibliography{quantumwalk}

\begin{thebibliography}{33}%
\makeatletter
\providecommand \@ifxundefined [1]{%
 \@ifx{#1\undefined}
}%
\providecommand \@ifnum [1]{%
 \ifnum #1\expandafter \@firstoftwo
 \else \expandafter \@secondoftwo
 \fi
}%
\providecommand \@ifx [1]{%
 \ifx #1\expandafter \@firstoftwo
 \else \expandafter \@secondoftwo
 \fi
}%
\providecommand \natexlab [1]{#1}%
\providecommand \enquote  [1]{``#1''}%
\providecommand \bibnamefont  [1]{#1}%
\providecommand \bibfnamefont [1]{#1}%
\providecommand \citenamefont [1]{#1}%
\providecommand \href@noop [0]{\@secondoftwo}%
\providecommand \href [0]{\begingroup \@sanitize@url \@href}%
\providecommand \@href[1]{\@@startlink{#1}\@@href}%
\providecommand \@@href[1]{\endgroup#1\@@endlink}%
\providecommand \@sanitize@url [0]{\catcode `\\12\catcode `\$12\catcode `\&12\catcode `\#12\catcode `\^12\catcode `\_12\catcode `\%12\relax}%
\providecommand \@@startlink[1]{}%
\providecommand \@@endlink[0]{}%
\providecommand \url  [0]{\begingroup\@sanitize@url \@url }%
\providecommand \@url [1]{\endgroup\@href {#1}{\urlprefix }}%
\providecommand \urlprefix  [0]{URL }%
\providecommand \Eprint [0]{\href }%
\providecommand \doibase [0]{https://doi.org/}%
\providecommand \selectlanguage [0]{\@gobble}%
\providecommand \bibinfo  [0]{\@secondoftwo}%
\providecommand \bibfield  [0]{\@secondoftwo}%
\providecommand \translation [1]{[#1]}%
\providecommand \BibitemOpen [0]{}%
\providecommand \bibitemStop [0]{}%
\providecommand \bibitemNoStop [0]{.\EOS\space}%
\providecommand \EOS [0]{\spacefactor3000\relax}%
\providecommand \BibitemShut  [1]{\csname bibitem#1\endcsname}%
\let\auto@bib@innerbib\@empty
\bibitem [{\citenamefont {Li}\ \emph {et~al.}(2022)\citenamefont {Li}, \citenamefont {Huang}, \citenamefont {Jin}, \citenamefont {Hou},\ and\ \citenamefont {Wang}}]{li2022quantum}%
  \BibitemOpen
  \bibfield  {author} {\bibinfo {author} {\bibfnamefont {Q.}~\bibnamefont {Li}}, \bibinfo {author} {\bibfnamefont {Y.}~\bibnamefont {Huang}}, \bibinfo {author} {\bibfnamefont {S.}~\bibnamefont {Jin}}, \bibinfo {author} {\bibfnamefont {X.}~\bibnamefont {Hou}},\ and\ \bibinfo {author} {\bibfnamefont {X.}~\bibnamefont {Wang}},\ }\href {https://link.springer.com/article/10.1007/s11432-022-3492-x} {\bibfield  {journal} {\bibinfo  {journal} {Science China Information Sciences}\ }\textbf {\bibinfo {volume} {65}},\ \bibinfo {pages} {200504} (\bibinfo {year} {2022})}\BibitemShut {NoStop}%
\bibitem [{\citenamefont {Zheng}\ \emph {et~al.}(2022)\citenamefont {Zheng}, \citenamefont {Zhu}, \citenamefont {Xue}, \citenamefont {Wang}, \citenamefont {Wu}, \citenamefont {Yu}, \citenamefont {Yu}, \citenamefont {Liu}, \citenamefont {Deng}, \citenamefont {Wu} \emph {et~al.}}]{zheng2022quantum}%
  \BibitemOpen
  \bibfield  {author} {\bibinfo {author} {\bibfnamefont {Q.}~\bibnamefont {Zheng}}, \bibinfo {author} {\bibfnamefont {P.}~\bibnamefont {Zhu}}, \bibinfo {author} {\bibfnamefont {S.}~\bibnamefont {Xue}}, \bibinfo {author} {\bibfnamefont {Y.}~\bibnamefont {Wang}}, \bibinfo {author} {\bibfnamefont {C.}~\bibnamefont {Wu}}, \bibinfo {author} {\bibfnamefont {X.}~\bibnamefont {Yu}}, \bibinfo {author} {\bibfnamefont {M.}~\bibnamefont {Yu}}, \bibinfo {author} {\bibfnamefont {Y.}~\bibnamefont {Liu}}, \bibinfo {author} {\bibfnamefont {M.}~\bibnamefont {Deng}}, \bibinfo {author} {\bibfnamefont {J.}~\bibnamefont {Wu}}, \emph {et~al.},\ }\href {https://link.springer.com/article/10.1007/s11432-021-3334-1} {\bibfield  {journal} {\bibinfo  {journal} {Science China Information Sciences}\ }\textbf {\bibinfo {volume} {65}},\ \bibinfo {pages} {1} (\bibinfo {year} {2022})}\BibitemShut {NoStop}%
\bibitem [{\citenamefont {He}\ \emph {et~al.}(2023)\citenamefont {He}, \citenamefont {Sun}, \citenamefont {Yang},\ and\ \citenamefont {Yuan}}]{he2023exact}%
  \BibitemOpen
  \bibfield  {author} {\bibinfo {author} {\bibfnamefont {X.}~\bibnamefont {He}}, \bibinfo {author} {\bibfnamefont {X.}~\bibnamefont {Sun}}, \bibinfo {author} {\bibfnamefont {G.}~\bibnamefont {Yang}},\ and\ \bibinfo {author} {\bibfnamefont {P.}~\bibnamefont {Yuan}},\ }\href {https://link.springer.com/article/10.1007/s11432-021-3468-x} {\bibfield  {journal} {\bibinfo  {journal} {Science China. Information Sciences}\ }\textbf {\bibinfo {volume} {66}},\ \bibinfo {pages} {129503} (\bibinfo {year} {2023})}\BibitemShut {NoStop}%
\bibitem [{\citenamefont {Gao}\ \emph {et~al.}(2023)\citenamefont {Gao}, \citenamefont {Pan},\ and\ \citenamefont {Yang}}]{gao2023quantum}%
  \BibitemOpen
  \bibfield  {author} {\bibinfo {author} {\bibfnamefont {S.}~\bibnamefont {Gao}}, \bibinfo {author} {\bibfnamefont {S.}~\bibnamefont {Pan}},\ and\ \bibinfo {author} {\bibfnamefont {Y.}~\bibnamefont {Yang}},\ }\href {https://link.springer.com/article/10.1007/s11432-021-3400-3} {\bibfield  {journal} {\bibinfo  {journal} {Science China. Information Sciences}\ }\textbf {\bibinfo {volume} {66}},\ \bibinfo {pages} {129501} (\bibinfo {year} {2023})}\BibitemShut {NoStop}%
\bibitem [{\citenamefont {Aharonov}\ \emph {et~al.}(1993)\citenamefont {Aharonov}, \citenamefont {Davidovich},\ and\ \citenamefont {Zagury}}]{aharonov_quantum_1993}%
  \BibitemOpen
  \bibfield  {author} {\bibinfo {author} {\bibfnamefont {Y.}~\bibnamefont {Aharonov}}, \bibinfo {author} {\bibfnamefont {L.}~\bibnamefont {Davidovich}},\ and\ \bibinfo {author} {\bibfnamefont {N.}~\bibnamefont {Zagury}},\ }\href {https://doi.org/10.1103/PhysRevA.48.1687} {\bibfield  {journal} {\bibinfo  {journal} {Phys. Rev. A}\ }\textbf {\bibinfo {volume} {48}},\ \bibinfo {pages} {1687} (\bibinfo {year} {1993})}\BibitemShut {NoStop}%
\bibitem [{\citenamefont {Childs}\ and\ \citenamefont {Goldstone}(2004)}]{childs_spatial_2004}%
  \BibitemOpen
  \bibfield  {author} {\bibinfo {author} {\bibfnamefont {A.~M.}\ \bibnamefont {Childs}}\ and\ \bibinfo {author} {\bibfnamefont {J.}~\bibnamefont {Goldstone}},\ }\href {https://doi.org/10.1103/PhysRevA.70.022314} {\bibfield  {journal} {\bibinfo  {journal} {Phys. Rev. A}\ }\textbf {\bibinfo {volume} {70}},\ \bibinfo {pages} {022314} (\bibinfo {year} {2004})}\BibitemShut {NoStop}%
\bibitem [{\citenamefont {Shenvi}\ \emph {et~al.}(2003)\citenamefont {Shenvi}, \citenamefont {Kempe},\ and\ \citenamefont {Whaley}}]{shenvi_quantum_2003}%
  \BibitemOpen
  \bibfield  {author} {\bibinfo {author} {\bibfnamefont {N.}~\bibnamefont {Shenvi}}, \bibinfo {author} {\bibfnamefont {J.}~\bibnamefont {Kempe}},\ and\ \bibinfo {author} {\bibfnamefont {K.~B.}\ \bibnamefont {Whaley}},\ }\href {https://doi.org/10.1103/PhysRevA.67.052307} {\bibfield  {journal} {\bibinfo  {journal} {Phys. Rev. A}\ }\textbf {\bibinfo {volume} {67}},\ \bibinfo {pages} {052307} (\bibinfo {year} {2003})}\BibitemShut {NoStop}%
\bibitem [{\citenamefont {Ambainis}(2007)}]{ambainis2007quantum}%
  \BibitemOpen
  \bibfield  {author} {\bibinfo {author} {\bibfnamefont {A.}~\bibnamefont {Ambainis}},\ }\href {https://epubs.siam.org/doi/abs/10.1137/s0097539705447311} {\bibfield  {journal} {\bibinfo  {journal} {SIAM J. Comput.}\ }\textbf {\bibinfo {volume} {37}},\ \bibinfo {pages} {210} (\bibinfo {year} {2007})}\BibitemShut {NoStop}%
\bibitem [{\citenamefont {Magniez}\ \emph {et~al.}(2007)\citenamefont {Magniez}, \citenamefont {Santha},\ and\ \citenamefont {Szegedy}}]{magniez2007quantum}%
  \BibitemOpen
  \bibfield  {author} {\bibinfo {author} {\bibfnamefont {F.}~\bibnamefont {Magniez}}, \bibinfo {author} {\bibfnamefont {M.}~\bibnamefont {Santha}},\ and\ \bibinfo {author} {\bibfnamefont {M.}~\bibnamefont {Szegedy}},\ }\href {https://epubs.siam.org/doi/10.1137/050643684} {\bibfield  {journal} {\bibinfo  {journal} {SIAM J. Comput.}\ }\textbf {\bibinfo {volume} {37}},\ \bibinfo {pages} {413} (\bibinfo {year} {2007})}\BibitemShut {NoStop}%
\bibitem [{\citenamefont {Z{\"a}hringer}\ \emph {et~al.}(2010)\citenamefont {Z{\"a}hringer}, \citenamefont {Kirchmair}, \citenamefont {Gerritsma}, \citenamefont {Solano}, \citenamefont {Blatt},\ and\ \citenamefont {Roos}}]{zahringer2010realization}%
  \BibitemOpen
  \bibfield  {author} {\bibinfo {author} {\bibfnamefont {F.}~\bibnamefont {Z{\"a}hringer}}, \bibinfo {author} {\bibfnamefont {G.}~\bibnamefont {Kirchmair}}, \bibinfo {author} {\bibfnamefont {R.}~\bibnamefont {Gerritsma}}, \bibinfo {author} {\bibfnamefont {E.}~\bibnamefont {Solano}}, \bibinfo {author} {\bibfnamefont {R.}~\bibnamefont {Blatt}},\ and\ \bibinfo {author} {\bibfnamefont {C.~F.}\ \bibnamefont {Roos}},\ }\href {https://link.aps.org/doi/10.1103/PhysRevLett.104.100503} {\bibfield  {journal} {\bibinfo  {journal} {Phys. Rev. Lett.}\ }\textbf {\bibinfo {volume} {104}},\ \bibinfo {pages} {100503} (\bibinfo {year} {2010})}\BibitemShut {NoStop}%
\bibitem [{\citenamefont {Karski}\ \emph {et~al.}(2009)\citenamefont {Karski}, \citenamefont {F{\"o}rster}, \citenamefont {Choi}, \citenamefont {Steffen}, \citenamefont {Alt}, \citenamefont {Meschede},\ and\ \citenamefont {Widera}}]{karski2009quantum}%
  \BibitemOpen
  \bibfield  {author} {\bibinfo {author} {\bibfnamefont {M.}~\bibnamefont {Karski}}, \bibinfo {author} {\bibfnamefont {L.}~\bibnamefont {F{\"o}rster}}, \bibinfo {author} {\bibfnamefont {J.-M.}\ \bibnamefont {Choi}}, \bibinfo {author} {\bibfnamefont {A.}~\bibnamefont {Steffen}}, \bibinfo {author} {\bibfnamefont {W.}~\bibnamefont {Alt}}, \bibinfo {author} {\bibfnamefont {D.}~\bibnamefont {Meschede}},\ and\ \bibinfo {author} {\bibfnamefont {A.}~\bibnamefont {Widera}},\ }\href {https://www.science.org/doi/10.1126/science.1174436} {\bibfield  {journal} {\bibinfo  {journal} {Science}\ }\textbf {\bibinfo {volume} {325}},\ \bibinfo {pages} {174} (\bibinfo {year} {2009})}\BibitemShut {NoStop}%
\bibitem [{\citenamefont {Tang}\ \emph {et~al.}(2018)\citenamefont {Tang}, \citenamefont {Lin}, \citenamefont {Feng}, \citenamefont {Chen}, \citenamefont {Gao}, \citenamefont {Sun}, \citenamefont {Wang}, \citenamefont {Lai}, \citenamefont {Xu}, \citenamefont {Wang} \emph {et~al.}}]{tang2018experimental}%
  \BibitemOpen
  \bibfield  {author} {\bibinfo {author} {\bibfnamefont {H.}~\bibnamefont {Tang}}, \bibinfo {author} {\bibfnamefont {X.-F.}\ \bibnamefont {Lin}}, \bibinfo {author} {\bibfnamefont {Z.}~\bibnamefont {Feng}}, \bibinfo {author} {\bibfnamefont {J.-Y.}\ \bibnamefont {Chen}}, \bibinfo {author} {\bibfnamefont {J.}~\bibnamefont {Gao}}, \bibinfo {author} {\bibfnamefont {K.}~\bibnamefont {Sun}}, \bibinfo {author} {\bibfnamefont {C.-Y.}\ \bibnamefont {Wang}}, \bibinfo {author} {\bibfnamefont {P.-C.}\ \bibnamefont {Lai}}, \bibinfo {author} {\bibfnamefont {X.-Y.}\ \bibnamefont {Xu}}, \bibinfo {author} {\bibfnamefont {Y.}~\bibnamefont {Wang}}, \emph {et~al.},\ }\href {https://www.science.org/doi/10.1126/sciadv.aat3174} {\bibfield  {journal} {\bibinfo  {journal} {Sci. Adv.}\ }\textbf {\bibinfo {volume} {4}},\ \bibinfo {pages} {eaat3174} (\bibinfo {year} {2018})}\BibitemShut {NoStop}%
\bibitem [{\citenamefont {Gong}\ \emph {et~al.}(2021)\citenamefont {Gong}, \citenamefont {Wang}, \citenamefont {Zha}, \citenamefont {Chen}, \citenamefont {Huang}, \citenamefont {Wu}, \citenamefont {Zhu}, \citenamefont {Zhao}, \citenamefont {Li}, \citenamefont {Guo} \emph {et~al.}}]{gong2021quantum}%
  \BibitemOpen
  \bibfield  {author} {\bibinfo {author} {\bibfnamefont {M.}~\bibnamefont {Gong}}, \bibinfo {author} {\bibfnamefont {S.}~\bibnamefont {Wang}}, \bibinfo {author} {\bibfnamefont {C.}~\bibnamefont {Zha}}, \bibinfo {author} {\bibfnamefont {M.-C.}\ \bibnamefont {Chen}}, \bibinfo {author} {\bibfnamefont {H.-L.}\ \bibnamefont {Huang}}, \bibinfo {author} {\bibfnamefont {Y.}~\bibnamefont {Wu}}, \bibinfo {author} {\bibfnamefont {Q.}~\bibnamefont {Zhu}}, \bibinfo {author} {\bibfnamefont {Y.}~\bibnamefont {Zhao}}, \bibinfo {author} {\bibfnamefont {S.}~\bibnamefont {Li}}, \bibinfo {author} {\bibfnamefont {S.}~\bibnamefont {Guo}}, \emph {et~al.},\ }\href {https://www.science.org/doi/abs/10.1126/science.abg7812} {\bibfield  {journal} {\bibinfo  {journal} {Science}\ }\textbf {\bibinfo {volume} {372}},\ \bibinfo {pages} {948} (\bibinfo {year} {2021})}\BibitemShut {NoStop}%
\bibitem [{\citenamefont {Rhodes}\ and\ \citenamefont {Wong}(2019)}]{rhodes_quantum_2019}%
  \BibitemOpen
  \bibfield  {author} {\bibinfo {author} {\bibfnamefont {M.~L.}\ \bibnamefont {Rhodes}}\ and\ \bibinfo {author} {\bibfnamefont {T.~G.}\ \bibnamefont {Wong}},\ }\href {https://doi.org/10.1103/PhysRevA.99.032301} {\bibfield  {journal} {\bibinfo  {journal} {Phys. Rev. A}\ }\textbf {\bibinfo {volume} {99}},\ \bibinfo {pages} {032301} (\bibinfo {year} {2019})}\BibitemShut {NoStop}%
\bibitem [{\citenamefont {Peng}\ \emph {et~al.}(2024{\natexlab{a}})\citenamefont {Peng}, \citenamefont {Li},\ and\ \citenamefont {Sun}}]{peng2024deterministic}%
  \BibitemOpen
  \bibfield  {author} {\bibinfo {author} {\bibfnamefont {F.}~\bibnamefont {Peng}}, \bibinfo {author} {\bibfnamefont {M.}~\bibnamefont {Li}},\ and\ \bibinfo {author} {\bibfnamefont {X.}~\bibnamefont {Sun}},\ }\href@noop {} {\bibfield  {journal} {\bibinfo  {journal} {Physical Review Research}\ }\textbf {\bibinfo {volume} {6}},\ \bibinfo {pages} {033042} (\bibinfo {year} {2024}{\natexlab{a}})}\BibitemShut {NoStop}%
\bibitem [{\citenamefont {Wong}(2016)}]{wong_quantum_2016}%
  \BibitemOpen
  \bibfield  {author} {\bibinfo {author} {\bibfnamefont {T.~G.}\ \bibnamefont {Wong}},\ }\href {https://doi.org/10.1088/1751-8113/49/19/195303} {\bibfield  {journal} {\bibinfo  {journal} {J. Phys. A: Math. Theor.}\ }\textbf {\bibinfo {volume} {49}},\ \bibinfo {pages} {195303} (\bibinfo {year} {2016})}\BibitemShut {NoStop}%
\bibitem [{\citenamefont {Peng}\ \emph {et~al.}(2024{\natexlab{b}})\citenamefont {Peng}, \citenamefont {Li},\ and\ \citenamefont {Sun}}]{peng2024lackadaisical}%
  \BibitemOpen
  \bibfield  {author} {\bibinfo {author} {\bibfnamefont {F.}~\bibnamefont {Peng}}, \bibinfo {author} {\bibfnamefont {M.}~\bibnamefont {Li}},\ and\ \bibinfo {author} {\bibfnamefont {X.}~\bibnamefont {Sun}},\ }\href {https://doi.org/10.1016/j.physa.2024.129495} {\bibfield  {journal} {\bibinfo  {journal} {Physica A}\ }\textbf {\bibinfo {volume} {635}},\ \bibinfo {pages} {129495} (\bibinfo {year} {2024}{\natexlab{b}})}\BibitemShut {NoStop}%
\bibitem [{\citenamefont {Dimcovic}\ \emph {et~al.}(2011)\citenamefont {Dimcovic}, \citenamefont {Rockwell}, \citenamefont {Milligan}, \citenamefont {Burton}, \citenamefont {Nguyen},\ and\ \citenamefont {Kovchegov}}]{dimcovic2011framework}%
  \BibitemOpen
  \bibfield  {author} {\bibinfo {author} {\bibfnamefont {Z.}~\bibnamefont {Dimcovic}}, \bibinfo {author} {\bibfnamefont {D.}~\bibnamefont {Rockwell}}, \bibinfo {author} {\bibfnamefont {I.}~\bibnamefont {Milligan}}, \bibinfo {author} {\bibfnamefont {R.~M.}\ \bibnamefont {Burton}}, \bibinfo {author} {\bibfnamefont {T.}~\bibnamefont {Nguyen}},\ and\ \bibinfo {author} {\bibfnamefont {Y.}~\bibnamefont {Kovchegov}},\ }\href {https://doi.org/10.1103/PhysRevA.84.032311} {\bibfield  {journal} {\bibinfo  {journal} {Phys. Rev. A}\ }\textbf {\bibinfo {volume} {84}},\ \bibinfo {pages} {032311} (\bibinfo {year} {2011})}\BibitemShut {NoStop}%
\bibitem [{\citenamefont {Ambainis}\ \emph {et~al.}(2004)\citenamefont {Ambainis}, \citenamefont {Kempe},\ and\ \citenamefont {Rivosh}}]{ambainis2004coins}%
  \BibitemOpen
  \bibfield  {author} {\bibinfo {author} {\bibfnamefont {A.}~\bibnamefont {Ambainis}}, \bibinfo {author} {\bibfnamefont {J.}~\bibnamefont {Kempe}},\ and\ \bibinfo {author} {\bibfnamefont {A.}~\bibnamefont {Rivosh}},\ }\bibfield  {journal} {\bibinfo  {journal} {arXiv preprint quant-ph/0402107}\ }\href {https://doi.org/10.48550/arXiv.quant-ph/0402107} {10.48550/arXiv.quant-ph/0402107} (\bibinfo {year} {2004})\BibitemShut {NoStop}%
\bibitem [{\citenamefont {Apers}\ \emph {et~al.}(2022)\citenamefont {Apers}, \citenamefont {Chakraborty}, \citenamefont {Novo},\ and\ \citenamefont {Roland}}]{apers_quadratic_2022}%
  \BibitemOpen
  \bibfield  {author} {\bibinfo {author} {\bibfnamefont {S.}~\bibnamefont {Apers}}, \bibinfo {author} {\bibfnamefont {S.}~\bibnamefont {Chakraborty}}, \bibinfo {author} {\bibfnamefont {L.}~\bibnamefont {Novo}},\ and\ \bibinfo {author} {\bibfnamefont {J.}~\bibnamefont {Roland}},\ }\href {https://doi.org/10.1103/PhysRevLett.129.160502} {\bibfield  {journal} {\bibinfo  {journal} {Phys. Rev. Lett.}\ }\textbf {\bibinfo {volume} {129}},\ \bibinfo {pages} {160502} (\bibinfo {year} {2022})}\BibitemShut {NoStop}%
\bibitem [{\citenamefont {Ambainis}\ \emph {et~al.}(2020)\citenamefont {Ambainis}, \citenamefont {Gily{\'e}n}, \citenamefont {Jeffery},\ and\ \citenamefont {Kokainis}}]{ambainis2020quadratic}%
  \BibitemOpen
  \bibfield  {author} {\bibinfo {author} {\bibfnamefont {A.}~\bibnamefont {Ambainis}}, \bibinfo {author} {\bibfnamefont {A.}~\bibnamefont {Gily{\'e}n}}, \bibinfo {author} {\bibfnamefont {S.}~\bibnamefont {Jeffery}},\ and\ \bibinfo {author} {\bibfnamefont {M.}~\bibnamefont {Kokainis}},\ }in\ \href {https://doi.org/10.1145/3357713.3384252} {\emph {\bibinfo {booktitle} {Proceedings of the 52nd Annual ACM SIGACT Symposium on Theory of Computing}}}\ (\bibinfo {year} {2020})\ pp.\ \bibinfo {pages} {412--424}\BibitemShut {NoStop}%
\bibitem [{\citenamefont {Reitzner}\ \emph {et~al.}(2009)\citenamefont {Reitzner}, \citenamefont {Hillery}, \citenamefont {Feldman},\ and\ \citenamefont {Buzek}}]{reitzner_quantum_2009}%
  \BibitemOpen
  \bibfield  {author} {\bibinfo {author} {\bibfnamefont {D.}~\bibnamefont {Reitzner}}, \bibinfo {author} {\bibfnamefont {M.}~\bibnamefont {Hillery}}, \bibinfo {author} {\bibfnamefont {E.}~\bibnamefont {Feldman}},\ and\ \bibinfo {author} {\bibfnamefont {V.}~\bibnamefont {Buzek}},\ }\href {https://doi.org/10.1103/PhysRevA.79.012323} {\bibfield  {journal} {\bibinfo  {journal} {Phys. Rev. A}\ }\textbf {\bibinfo {volume} {79}},\ \bibinfo {pages} {012323} (\bibinfo {year} {2009})}\BibitemShut {NoStop}%
\bibitem [{\citenamefont {Chiang}\ and\ \citenamefont {Hsieh}(2018)}]{chiang_optimizing_2018}%
  \BibitemOpen
  \bibfield  {author} {\bibinfo {author} {\bibfnamefont {C.-F.}\ \bibnamefont {Chiang}}\ and\ \bibinfo {author} {\bibfnamefont {C.-Y.}\ \bibnamefont {Hsieh}},\ }\href {http://arxiv.org/abs/1706.05403} {\bibinfo {title} {Optimizing quantum walk search on a reduced uniform complete multi-partite graph}} (\bibinfo {year} {2018})\BibitemShut {NoStop}%
\bibitem [{\citenamefont {Skoupy}\ and\ \citenamefont {Stefanak}(2021)}]{skoupy_quantum_2021}%
  \BibitemOpen
  \bibfield  {author} {\bibinfo {author} {\bibfnamefont {S.}~\bibnamefont {Skoupy}}\ and\ \bibinfo {author} {\bibfnamefont {M.}~\bibnamefont {Stefanak}},\ }\href {https://doi.org/10.1103/PhysRevA.103.042222} {\bibfield  {journal} {\bibinfo  {journal} {Phys. Rev. A}\ }\textbf {\bibinfo {volume} {103}},\ \bibinfo {pages} {042222} (\bibinfo {year} {2021})}\BibitemShut {NoStop}%
\bibitem [{\citenamefont {Brassard}(1997)}]{brassard_searching_1997}%
  \BibitemOpen
  \bibfield  {author} {\bibinfo {author} {\bibfnamefont {G.}~\bibnamefont {Brassard}},\ }\href {https://doi.org/10.1126/science.275.5300.627} {\bibfield  {journal} {\bibinfo  {journal} {Science}\ }\textbf {\bibinfo {volume} {275}},\ \bibinfo {pages} {627} (\bibinfo {year} {1997})}\BibitemShut {NoStop}%
\bibitem [{\citenamefont {Brassard}\ \emph {et~al.}(2002)\citenamefont {Brassard}, \citenamefont {Hoyer}, \citenamefont {Mosca},\ and\ \citenamefont {Tapp}}]{brassard2002quantum}%
  \BibitemOpen
  \bibfield  {author} {\bibinfo {author} {\bibfnamefont {G.}~\bibnamefont {Brassard}}, \bibinfo {author} {\bibfnamefont {P.}~\bibnamefont {Hoyer}}, \bibinfo {author} {\bibfnamefont {M.}~\bibnamefont {Mosca}},\ and\ \bibinfo {author} {\bibfnamefont {A.}~\bibnamefont {Tapp}},\ }\href {https://doi.org/10.1090/conm/305/05215} {\bibfield  {journal} {\bibinfo  {journal} {Contemp. Math.}\ }\textbf {\bibinfo {volume} {305}},\ \bibinfo {pages} {53} (\bibinfo {year} {2002})}\BibitemShut {NoStop}%
\bibitem [{\citenamefont {Grover}(2005)}]{grover_fixed-point_2005}%
  \BibitemOpen
  \bibfield  {author} {\bibinfo {author} {\bibfnamefont {L.~K.}\ \bibnamefont {Grover}},\ }\href {https://doi.org/10.1103/PhysRevLett.95.150501} {\bibfield  {journal} {\bibinfo  {journal} {Phys. Rev. Lett.}\ }\textbf {\bibinfo {volume} {95}},\ \bibinfo {pages} {150501} (\bibinfo {year} {2005})}\BibitemShut {NoStop}%
\bibitem [{\citenamefont {Yoder}\ \emph {et~al.}(2014)\citenamefont {Yoder}, \citenamefont {Low},\ and\ \citenamefont {Chuang}}]{yoder_fixed-point_2014}%
  \BibitemOpen
  \bibfield  {author} {\bibinfo {author} {\bibfnamefont {T.~J.}\ \bibnamefont {Yoder}}, \bibinfo {author} {\bibfnamefont {G.~H.}\ \bibnamefont {Low}},\ and\ \bibinfo {author} {\bibfnamefont {I.~L.}\ \bibnamefont {Chuang}},\ }\href {https://doi.org/10.1103/PhysRevLett.113.210501} {\bibfield  {journal} {\bibinfo  {journal} {Phys. Rev. Lett.}\ }\textbf {\bibinfo {volume} {113}},\ \bibinfo {pages} {210501} (\bibinfo {year} {2014})}\BibitemShut {NoStop}%
\bibitem [{\citenamefont {Xu}\ \emph {et~al.}(2022)\citenamefont {Xu}, \citenamefont {Zhang},\ and\ \citenamefont {Li}}]{xu_robust_2022}%
  \BibitemOpen
  \bibfield  {author} {\bibinfo {author} {\bibfnamefont {Y.}~\bibnamefont {Xu}}, \bibinfo {author} {\bibfnamefont {D.}~\bibnamefont {Zhang}},\ and\ \bibinfo {author} {\bibfnamefont {L.}~\bibnamefont {Li}},\ }\href {https://doi.org/10.1103/PhysRevA.106.052207} {\bibfield  {journal} {\bibinfo  {journal} {Phys. Rev. A}\ }\textbf {\bibinfo {volume} {106}},\ \bibinfo {pages} {052207} (\bibinfo {year} {2022})}\BibitemShut {NoStop}%
\bibitem [{\citenamefont {Li}\ and\ \citenamefont {Li}(2023)}]{li2023deterministic}%
  \BibitemOpen
  \bibfield  {author} {\bibinfo {author} {\bibfnamefont {G.}~\bibnamefont {Li}}\ and\ \bibinfo {author} {\bibfnamefont {L.}~\bibnamefont {Li}},\ }\href@noop {} {\bibfield  {journal} {\bibinfo  {journal} {Information and Computation}\ }\textbf {\bibinfo {volume} {292}},\ \bibinfo {pages} {105042} (\bibinfo {year} {2023})}\BibitemShut {NoStop}%
\bibitem [{\citenamefont {Oliveira}\ and\ \citenamefont {Ramos}(2007)}]{oliveira2007quantum}%
  \BibitemOpen
  \bibfield  {author} {\bibinfo {author} {\bibfnamefont {D.~S.}\ \bibnamefont {Oliveira}}\ and\ \bibinfo {author} {\bibfnamefont {R.~V.}\ \bibnamefont {Ramos}},\ }\href@noop {} {\bibfield  {journal} {\bibinfo  {journal} {Quantum Comput. Comput}\ }\textbf {\bibinfo {volume} {7}},\ \bibinfo {pages} {17} (\bibinfo {year} {2007})}\BibitemShut {NoStop}%
\bibitem [{\citenamefont {Cuccaro}\ \emph {et~al.}(2004)\citenamefont {Cuccaro}, \citenamefont {Draper}, \citenamefont {Kutin},\ and\ \citenamefont {Moulton}}]{cuccaro2004new}%
  \BibitemOpen
  \bibfield  {author} {\bibinfo {author} {\bibfnamefont {S.~A.}\ \bibnamefont {Cuccaro}}, \bibinfo {author} {\bibfnamefont {T.~G.}\ \bibnamefont {Draper}}, \bibinfo {author} {\bibfnamefont {S.~A.}\ \bibnamefont {Kutin}},\ and\ \bibinfo {author} {\bibfnamefont {D.~P.}\ \bibnamefont {Moulton}},\ }\href@noop {} {\bibfield  {journal} {\bibinfo  {journal} {arXiv preprint quant-ph/0410184}\ } (\bibinfo {year} {2004})}\BibitemShut {NoStop}%
\bibitem [{\citenamefont {Saeedi}\ and\ \citenamefont {Pedram}(2013)}]{saeedi2013linear}%
  \BibitemOpen
  \bibfield  {author} {\bibinfo {author} {\bibfnamefont {M.}~\bibnamefont {Saeedi}}\ and\ \bibinfo {author} {\bibfnamefont {M.}~\bibnamefont {Pedram}},\ }\href@noop {} {\bibfield  {journal} {\bibinfo  {journal} {Physical Review A—Atomic, Molecular, and Optical Physics}\ }\textbf {\bibinfo {volume} {87}},\ \bibinfo {pages} {062318} (\bibinfo {year} {2013})}\BibitemShut {NoStop}%
\end{thebibliography}%
\end{document}